\definecolor{NiColor}{RGB}{77,77,255}
\definecolor{NiColoRed}{RGB}{255,77,77}
\definecolor{NiCitation}{RGB}{0,181,26}
\newtheoremstyle{TheoremStyle}
{\topsep}
{\topsep}
{}
{}
{\sc}
{:}
{.5em}
{}
\def\@endtheorem{\begin{flushright}$\diamond$\end{flushright}}
\theoremstyle{TheoremStyle}
\newtheorem{theorem}{Theorem}[section]
\newtheorem{corollary}[theorem]{Corollary}
\newtheorem{proposition}[theorem]{Proposition}
\newtheorem{lemma}[theorem]{Lemma}
\newtheorem{remark}[theorem]{Remark}
\newtheorem{example}[theorem]{Example} 
\title{Subcriticality at High Temperatures in Spin Lattice Systems}
\author[a]{N Drago\thanks{\href{mailto:nicolo.drago@unige.it}{nicolo.drago@unige.it}}}
\author[b]{L. Pettinari\thanks{\href{mailto:lorenzo.pettinari@unitn.it}{lorenzo.pettinari@unitn.it}}}
\author[c]{C. J. F. van de Ven\thanks{\href{mailto:chris.ven@fau.de}{chris.ven@fau.de}}}
\affil[a]{Dipartimento di Matematica, Universit\`{a} di Genova and INdAM, Via Dodecaneso 35, I-16146 Genova, Italy}
\affil[b]{Dipartimento di Matematica, Universit\`a di Trento and INFN-TIFPA and INdAM, Via Sommarive 14, I-38123 Povo, Italy}
\affil[c]{Friedrich-Alexander-Universit\"at Erlangen-N\"urnberg, Department of Mathematics,
Cauerstra\ss e 11 91058 Erlangen, Germany}
\begin{document}
\maketitle

\begin{abstract}
We provide new sufficient conditions for subcriticality of classical and quantum spin lattice systems, formulated in terms of the uniqueness of Kubo-Martin-Schwinger (KMS) states.
This is achieved by exploiting a non-commutative analog of the Kirkwood-Salzburg equations together with a novel decomposition of local observables.
In contrast to standard approaches \cite{Bratteli_Robinson_97,Frohlich_Ueltschi_2015}, our condition is uniform with respect to the dimension of the single-site Hilbert space. Moreover, unlike the results of \cite{Drago_Pettinari_Van_de_Ven_2025}, which required control over the growth of the derivatives of the interaction potentials, our result only involves estimating the natural $C^*$-norm of these potentials. This substantially enlarges the class of interactions for which the theorems apply and provides better lower bounds on the subcritical inverse temperature.
\end{abstract}


\section{Introduction}
\label{Sec: introduction}

In the algebraic approach to classical and quantum theories, the Kubo-Martin-Schwinger (KMS) condition has been developed to identify states describing thermal equilibrium \cite{Bratteli_Robinson_97, Haag_Hugenholtz_Winnik_1967,Israel_1979, Pusz_Woronowicz_1978}.

For a quantum system described by a non-commutative $C^*$-algebra $\mathfrak{A}$, whose self-adjoint elements represent physical observables, the KMS condition is summarized as follows \cite{Bratteli_Robinson_97}.
Given a strongly continuous one-parameter group $\tau$ of $C^*$-automorphisms on $\mathfrak{A}$, which is interpreted as the dynamics on the underlying physical system, a state $\omega\in S(\mathfrak{A})$ is called \textbf{$(\beta,\tau)$-KMS quantum state} if
\begin{align}\label{Eq: quantum KMS condition}
	\omega(\mathfrak{a}\tau_{i\beta}\mathfrak{b})
	=\omega(\mathfrak{b}\mathfrak{a})\,,
\end{align}
for all $\tau$-analytic elements $\mathfrak{a},\mathfrak{b}\in\mathfrak{A}_\tau$.
Here $\beta>0$ represents an inverse temperature.
For a classical system described by a $C^*$-Poisson algebra $\mathscr{A}$, the KMS condition is instead formulated with the help of the Poisson bracket of $\mathscr{A}$ \cite{Aizenman_Gallavotti_Goldstein_Lebowitz_1976,Aizenman_Goldstein_Gruber_Lebowitz_Martin_1977,Drago_van_de_Ven_2023,Drago_Waldmann_2024,Gallavotti_Pulvirenti_1976,Gallavotti_Verboven_1975}.
In particular, given a derivation $\delta\colon\dot{\mathscr{A}}\to\mathscr{A}$ on $\mathscr{A}$, defined on the dense domain $\dot{\mathscr{A}}$ of the Poisson bracket $\{\;,\;\}$, a state $\omega\in S(\mathscr{A})$ is called a \textbf{$(\beta,\delta)$-KMS classical state} if
\begin{align}\label{Eq: classical KMS condition}
	\omega(\{a,b\})
	=\beta\omega(b\delta(a))\,,
\end{align}
for all $a,b\in\dot{\mathscr{A}}$.
Within this setting $\delta$ is thought as the infinitesimal generator of the classical dynamics on $\mathscr{A}$.
As shown in \cite{Aizenman_Goldstein_Gruber_Lebowitz_Martin_1977,
Drago_van_de_Ven_2023}, condition \eqref{Eq: classical KMS condition} is equivalent to the Dobrushin-Landford-Ruelle (DLR) condition \cite{Dobrushin_1968_1,Dobrushin_1968_2,Dobrushin_1968_3,Dobrushin_1969,Dobrushin_1970,Lanford_Ruelle_1969,Ruelle_1967} for a large class of physically interesting systems.

A class of models where KMS states have been largely studied is the one of quantum spin lattice systems, where the $C^*$-algebra $\mathfrak{A}$ coincides with the quasi-local $C^*$-algebra $\mathcal{A}_\Gamma$ over a countable set $\Gamma$.
In particular $\mathcal{A}_\Gamma$ is the $C^*$-direct limit of the net of $C^*$-algebras $\{\mathcal{A}_\Lambda\}_{\Lambda\Subset\Gamma}$ associated with finite regions of $\Gamma$.
The associated dynamics $\tau^\Gamma$ is generated by an appropriate family of quantum potential $\Phi=\{\Phi_\Lambda\}_{\Lambda\Subset\Gamma}$ ---\textit{cf.} Section \ref{Subsec: quantum quasi-local algebra} for further details.
These models aim at studying the collecting behaviour of quantum systems located at sites $x\in\Gamma$ with associated algebra of observables $\mathcal{A}_x=\mathcal{B}(\mathcal{H}_x)$, $\mathcal{H}_x$ being a finite dimensional Hilbert space.
The interaction between various sites is described by $\Phi$.
It is customary to split the family of quantum potential $\Phi$ in two contributions, namely one recollecting single-site potentials $\Psi=\{\Psi_x\}_{x\in\Gamma}$, and one describing purely "multilocal" potentials $\bar{\Phi}=\{\bar{\Phi}_\Lambda\}_{\Lambda\Subset\Gamma}$ ---that is, those describing interactions between regions $|\Lambda|>2$, \textit{cf.} Section \ref{Subsec: quantum quasi-local algebra}.

In this paper we will study KMS states on quantum spin lattice systems by considering in particular the \textbf{subcritical} regime.
By definition, an inverse temperature $\beta_{\textsc{u}}$ is called \textbf{subcritical} if the set of KMS states at inverse temperature $\beta$ is a singleton for all $\beta<\beta_{\textsc{u}}$.
This parameter should not be confused with the critical inverse temperature, which instead signals the onset of a phase transition.
Our goal is to provide a sufficient condition on the multi-local potential $\bar{\Phi}$ ensuring the existence of a non-vanishing subcritical inverse temperature.
In other words, we aim at proving uniqueness of KMS states for sufficiently small $\beta$, \textit{i.e.}, at high enough temperatures.

This topic has received a lot of attention \cite{Araki_1975,Kishimoto_1976, Sakai_1976}: Two classical general results can be found in \cite[Prop. 6.2.45, Thm. 6.2.46]{Bratteli_Robinson_97}, see also \cite{Frohlich_Ueltschi_2015}.
These are based on assumptions on the family $\Phi$ (or $\bar{\Phi}$) which essentially constrains the magnitude of $\|\Phi_X\|$ as the size of $X\Subset\Gamma$ grows to infinity.
Specifically, \cite[Prop. 6.2.45]{Bratteli_Robinson_97} and \cite{Frohlich_Ueltschi_2015} obtain uniqueness of KMS quantum states provided a suitably defined norm of the whole family of potential $\Phi$ ---including the single-site potential--- is finite.
In particular, \cite[Prop. 6.2.45]{Bratteli_Robinson_97} assumes that
\[
\exists\varepsilon>0\mid
\|\Phi\|_{\varepsilon+2\log(d+1)}<\infty\,,
\]	
where
\begin{align}\label{Eq: assumption on potential - norm on potentials}
	\|\Phi\|_{\lambda}
	:=\sup_{x\in\Gamma}\sum_{X\ni x}
	e^{\lambda(|X|-1)}\|\Phi_X\|
	<+\infty\,.
\end{align}
Here $d:=\dim\mathcal{H}_x$ is constant for all $x\in\Gamma$.
Conversely, \cite[Thm. 6.2.46]{Bratteli_Robinson_97} generalizes \cite[Prop. 6.2.45]{Bratteli_Robinson_97} by requiring only the finiteness of $\|\bar{\Phi}\|_{\varepsilon+\log2+3\log(d+1)}$ within the assumption that the latter commutes with the single-site potential $\Psi$.

An inherent limitation of the previous results lies in their lack of uniformity with respect to the dimension of the single-site Hilbert space $\mathcal{H}_x$.
In particular, the assumptions stated above are \textit{not} uniform in $d$.
This also extends  to the estimate of the subcritical inverse temperature (below which uniqueness is ensured) which gets lower and lower as $d$ raises.
This is ultimately a limitation preventing the extension of these results to the case of a quantum spin lattice system with infinite dimensional single-site Hilbert space $\mathcal{H}_x$.
For such systems, the only result available, to the best of our knowledge, was obtained in \cite{Albeverio_Kondratiev_Rockner_Tsikalenko_1997}, where a specific class of quantum spin lattice systems was considered.

Recently, a uniqueness result similar in spirit to those of \cite{Bratteli_Robinson_97,Frohlich_Ueltschi_2015} was obtained in \cite{Drago_Pettinari_Van_de_Ven_2025}.
Therein it was shown that uniqueness of both KMS quantum and classical states can be obtained under a condition uniform over the dimension of the single-site Hilbert space.
Despite this achievement, there are a few drawbacks in this approach which we aim to solve in this paper and which we list below.

First of all, the results in \cite{Drago_Pettinari_Van_de_Ven_2025} apply under the assumption that the family of quantum potentials $\Phi$ can be obtained by a  strict deformation quantization \cite{Berezin_1975,Landsman_1998_I,Landsman_2017,Rieffel_94} of a corresponding family of classical potential $\varphi=\{\varphi_\Lambda\}_{\Lambda\Subset\Gamma}$ associated with an $\mathbb{S}^2$-valued classical spin lattice system, that is, $\Phi_\Lambda=Q_\Gamma(\varphi_\Lambda)$ for all $\Lambda\Subset\Gamma$ ---\textit{cf.} Section \ref{Subsec: Uniqueness result for classical KMS states} for further details.
While this assumption is not very restrictive in applications, it is conceptually undesirable.

Secondly, uniqueness of KMS quantum and classical states is proved by considering a condition on the family $\varphi=\{\varphi_\Lambda\}_{\Lambda\Subset\Gamma}$ of classical potentials which involves higher-order derivatives of the potentials.
While this may appear to be the natural trade-off for obtaining results that are uniform with respect to the dimension of the single-site Hilbert space, we aim to show that it is nevertheless possible to achieve the same uniqueness results under assumptions relying solely on the standard $C^*$-norm of the underlying $C^*$-algebra, without invoking differentiability of the involved classical potentials.
In fact, the latter is not required in the proof of uniqueness of KMS classical states based on cluster expansion, \textit{cf.} \cite[Thm. 6.38]{Friedli_Velenik_2017}.
However, it is less clear how to generalize these results to the quantum setting.

Finally, the proofs of \cite[Thm. 4.1-4.7]{Drago_Pettinari_Van_de_Ven_2025} are not sufficiently robust to extend to the case where the main assumptions are imposed only on the multilocal potential $\bar{\Phi}$, without further assumptions on the single-site potential $\Psi$.
In other words, while \cite[Thm 4.7]{Drago_Pettinari_Van_de_Ven_2025} generalizes \cite[Prop. 6.2.45]{Bratteli_Robinson_97}, there exists no analogous generalization of \cite[Thm. 6.2.46]{Bratteli_Robinson_97}. 

Taking this into account, the first main result of this paper is the following (we refer to Section \ref{Sec: Uniqueness result for quantum KMS states} for all technical details):

\begin{theorem}\label{Thm: uniqueness for KMS quantum states - with commutation}
	Assume that the single site potential $\Psi$ commutes with $\bar{\Phi}$, \textit{cf.} condition \eqref{Eq: assumption on potential - commutation}.
	Then if $\beta>0$ fulfils
	\begin{align}\label{Eq: uniqueness for KMS quantum states - condition for uniqueness - with commutation}
		\exists\varepsilon>0\mid
		\beta\|\bar{\Phi}\|_{\varepsilon+\log 3}
		<\frac{1}{6}\frac{\varepsilon}{1+e^\varepsilon}\,.
	\end{align}
	Then there exists a unique $(\beta,\tau^\Gamma)$-KMS quantum state on $\mathcal{A}_\Gamma$.
\end{theorem}
Theorem \ref{Thm: uniqueness for KMS quantum states - with commutation} establishes a novel uniqueness result for KMS quantum states on $\mathcal{A}_\Gamma$.
Our main assumption on $\bar{\Phi}$ consists in a bound on the norm $\|\Bar{\Phi}\|_{\varepsilon+\log 3}$, \textit{cf.} equation \eqref{Eq: uniqueness for KMS quantum states - condition for uniqueness - with commutation}.
In comparison with \cite{Bratteli_Robinson_97,Frohlich_Ueltschi_2015}, this norm is uniform with respect to the dimension of the single-site Hilbert space. Consequently, the same uniformity carries over to our subcritical inverse temperature, \textit{cf.} Remark \ref{Rmk: subcritical inverse temperature}, while simultaneously avoiding the limitations of the analogous result in \cite{Drago_Pettinari_Van_de_Ven_2025}, where the potential were required to be quantization of sufficiently regular, classical interactions.

The proof of Theorem \ref{Thm: uniqueness for KMS quantum states - with commutation} is sufficiently flexible to be applied also in the classical setting.
In particular, in Section \ref{Sec: consequences of (main) Theorem} we will prove the following result.

\begin{theorem}\label{Thm: uniqueness for KMS classical states}
	Let $\varphi=\{\varphi_\Lambda\}_{\Lambda\Subset\Gamma}$ be such that condition \eqref{Eq: assumption on potential - C1 norm on potentials} holds and that
	\[
	\|\bar{\varphi}\|_{\log 3}<+\infty\,.
	\]
	Then there exists a unique $(\beta,\delta_\Gamma)$-KMS classical state for $\beta\in[0,\tilde{\beta}_{\textsc{u}})$ where
	\begin{align}\label{Eq: uniqueness for KMS classical states - estimate of critical temperature}
	\tilde{\beta}_{\textsc{u}}
	:=\frac{\log 2}{6\|\bar{\varphi}\|_{\log 3}}\,.
	\end{align}
\end{theorem}

Theorems \ref{Thm: uniqueness for KMS quantum states - with commutation}--\ref{Thm: uniqueness for KMS classical states} naturally motivate the search for a unified uniqueness result for both classical and quantum KMS states within the framework introduced in \cite{Drago_Pettinari_Van_de_Ven_2025}.
In that setting, a family of classical potentials $\varphi$ is quantized into a family of quantum potentials $\Phi=Q_\Gamma(\varphi)$ by means of a specific quantization map, \textit{cf.} Section \ref{Subsec: common subcritical region for quantum and classical systems}.
This raises the question of whether one can identify sufficient conditions on $\varphi$ that guarantee uniqueness of both classical and quantum KMS states below a common subcritical inverse temperature.
A partial positive answer was obtained in \cite[Thm. 4.7]{Drago_Pettinari_Van_de_Ven_2025}, albeit under a rather strong regularity assumption on $\varphi$ and without disentangling the contribution of the single-site potential $\psi$.

Since Theorem \ref{Thm: uniqueness for KMS classical states} partially addresses this issue, it is natural to ask whether a result analogous to \cite[Thm. 4.7]{Drago_Pettinari_Van_de_Ven_2025} can be established in the present setting.
However, a technical obstruction prevents a direct application of Theorems \ref{Thm: uniqueness for KMS quantum states - with commutation}--\ref{Thm: uniqueness for KMS classical states} to this problem.
Indeed, Theorem \ref{Thm: uniqueness for KMS quantum states - with commutation} assumes that the single-site potential $\Psi$ commutes with the multilocal potential $\bar{\Phi}$.
While this assumption is natural in the quantum setting, it is not easily translated into a condition at the classical level.

More precisely, one would need a condition on the classical single-site potential $\psi$ ensuring that its quantization $\Psi=Q_\Gamma(\psi)$ commutes with the quantization $\bar{\Phi}=Q_\Gamma(\bar{\varphi})$ of the multilocal potential $\bar{\varphi}$, which appears difficult to formulate in a satisfactory way.
To overcome this issue, we extend Theorem \ref{Thm: uniqueness for KMS quantum states - with commutation} to the case in which the single-site potential does not commute with the multilocal potential $\bar{\Phi}$.
This extension requires a stronger assumption on the latter.

Specifically, let $\varepsilon, \zeta > 0$ and define
\begin{align}\label{Eq: assumption on potential - norm on potentials strengthened}
\|\bar{\Phi}\|_{\varepsilon,\zeta}
:= \sup_{x\in\Gamma}
\sum_{\substack{\Lambda \Subset \Gamma \\ x \in \Lambda}}
e^{\varepsilon(|\Lambda|-1) + \zeta \|\Psi\|_\Lambda}
\|\bar{\Phi}_\Lambda\|\,,
\end{align}
where $\|\Psi\|_X := \sum_{x \in X} \|\Psi_x\|$.
It is worth noting that the norm $\|\bar{\Phi}\|_{\varepsilon,\zeta}$ depends explicitly on $\Psi$.
Roughly speaking, the use of $\|\bar{\Phi}\|_{\varepsilon,\zeta}$ replaces the commutativity assumption between $\Psi$ and $\bar{\Phi}$ with a stronger control on the growth of $\|\bar{\Phi}_\Lambda\|$, which must now also account for the growth of $\|\Psi\|_\Lambda$.
With this strengthened norm, we are able to prove the following result.

\begin{theorem}\label{Thm: uniqueness for KMS quantum states}
	Let $\beta>0$ be such that
	\begin{align}\label{Eq: uniqueness for KMS quantum states - condition for uniqueness}
	\exists\varepsilon>0\mid
	\beta\|\bar{\Phi}\|_{\varepsilon+\log 3,2\beta}
	<\frac{1}{6}\frac{\varepsilon}{1+e^\varepsilon}\,,
	\end{align}
	
    Then there exists a unique $(\beta,\tau^\Gamma)$-KMS quantum state on $\mathcal{A}_\Gamma$.
\end{theorem}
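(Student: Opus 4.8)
The plan is to extract from the KMS condition a non-commutative Kirkwood--Salzburg (KS) equation satisfied by \emph{every} $(\beta,\tau^\Gamma)$-KMS state, to reorganize it as an inhomogeneous linear fixed-point equation $\xi=\eta+\mathcal{K}_\beta\,\xi$ on a suitable weighted Banach space of coefficient sequences, and to show that \eqref{Eq: uniqueness for KMS quantum states - condition for uniqueness} forces $\|\mathcal{K}_\beta\|<1$; since $\eta$ and $\mathcal{K}_\beta$ do not depend on the state, this pins down $\xi$, hence the state, on a dense subalgebra. Existence of a KMS state for every $\beta>0$ is the standard thermodynamic limit of finite-volume Gibbs states (see \cite{Bratteli_Robinson_97}), so from now on I fix an arbitrary $(\beta,\tau^\Gamma)$-KMS state $\omega$ and, using $\|\omega\|=1$ and norm-density of $\bigcup_{\Lambda\Subset\Gamma}\mathcal{A}_\Lambda$, I only need to determine $\omega$ on each $\mathcal{A}_\Lambda$.

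The first ingredient is the decomposition of local observables. At every site $x$ let $\omega_x$ be the Gibbs state on $\mathcal{A}_x=\mathcal{B}(\mathcal{H}_x)$ of the single-site dynamics generated by $\Psi_x$, and split $\mathcal{A}_x=\mathbb{C}\mathds{1}_x\oplus\mathcal{A}_x^{\circ}$ with $\mathcal{A}_x^{\circ}:=\ker\omega_x$; tensoring over sites gives $\mathcal{A}_\Lambda=\bigoplus_{S\subseteq\Lambda}\mathcal{A}_S^{\circ}$ with $\mathcal{A}_S^{\circ}:=\bigotimes_{x\in S}\mathcal{A}_x^{\circ}$, and a unique decomposition $A=\sum_{S\subseteq\Lambda}A_S$, $A_S\in\mathcal{A}_S^{\circ}$, for every $A\in\mathcal{A}_\Lambda$. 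Centering with respect to $\omega_x$ (instead of picking a matrix-unit basis of $\mathcal{A}_x$ as in \cite{Bratteli_Robinson_97}) is exactly the ``novel decomposition'' alluded to in the abstract: the single-site part of the dynamics restricts to each $\mathcal{A}_x^{\circ}$ and is treated exactly through the modular automorphism of $\omega_x$, so that no hypothesis on $\Psi$ enters and all dependence on $d=\dim\mathcal{H}_x$ is absorbed. Thus $\omega$ is determined once $\omega(A_S)$ is known for all finite $S$ and all $A_S\in\mathcal{A}_S^{\circ}$.

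Next I would derive the KS recursion. Fix $S$ with $|S|\ge1$, choose $x\in S$, and write $A_S=a_x\,B$ with $a_x\in\mathcal{A}_x^{\circ}$, $B\in\mathcal{A}_{S\setminus x}^{\circ}$. Applying the KMS identity \eqref{Eq: quantum KMS condition} to $\omega(B\,\tau^\Gamma_{i\beta}(a_x))=\omega(a_x B)$ and expanding $\tau^\Gamma$ around the product dynamics generated by $\Psi$ alone via a Dyson--Duhamel series, one isolates: a term in which $a_x$ is carried by the $\Psi_x$-modular flow and reorganizes, thanks to the single-site KMS property of $\omega_x$, \emph{without any smallness requirement}; and, at each order, nested commutators with the multilocal potentials $\bar{\Phi}_X$, $x\in X$, integrated along a contour of total imaginary extent $2\beta$ (whence the norm $\|\cdot\|_{\,\cdot\,,2\beta}$), which after re-expanding the product $\bar{\Phi}_X\cdot(\text{running observable})$ through the decomposition above feed back a sum of terms $\omega(A'_{S'})$ with $S'$ essentially $(S\setminus x)\cup X$. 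Reorganizing this into a polymer/tree expansion whose elementary step inserts one potential $\bar{\Phi}_X$ and one factor of $\beta$, I obtain a fixed $\eta$ and a linear operator $\mathcal{K}_\beta$, both independent of $\omega$, on the Banach space $\mathfrak{B}$ of sequences $(\xi_S)_S$ of functionals on $\mathcal{A}_S^{\circ}$ with norm $\sup_S e^{(\varepsilon+\log 3)(|S|-1)}\,\|\xi_S\|$, such that the sequence $\xi_\omega:=(\omega|_{\mathcal{A}_S^{\circ}})_S$ satisfies $\xi_\omega=\eta+\mathcal{K}_\beta\,\xi_\omega$. Since a generic KMS state need not a priori have $\xi_\omega\in\mathfrak{B}$ (that would already encode clustering), one uses this identity by iterating it $n$ times from a fixed $A\in\mathcal{A}_\Lambda$: the remainder is bounded by $\|\mathcal{K}_\beta\|^n$ times a finite $\Lambda$-dependent constant, and the partial sums involve $\omega$ only through $\omega(\mathds{1})=1$ in the limit $n\to\infty$, so $\omega(A)$ is determined provided $\|\mathcal{K}_\beta\|<1$.

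It remains to estimate $\mathcal{K}_\beta$. Collecting the three sources of growth of one elementary step --- the simplex/contour integration producing a factor $\beta$, the two-sidedness of the commutators together with the re-decomposition producing a combinatorial multiplicity bounded by $6$, and the summation over $X\ni x$ against the geometric weight $e^{(\varepsilon+\log 3)(|X|-1)}$ together with the tree-graph count of how new polymers attach, which yields $\|\bar{\Phi}\|_{\varepsilon+\log 3,2\beta}$ times a convergent series collapsing to $(1+e^\varepsilon)/\varepsilon$ --- one aims at
\begin{align*}
\|\mathcal{K}_\beta\|_{\mathfrak{B}\to\mathfrak{B}}\;\le\;6\,\beta\,\|\bar{\Phi}\|_{\varepsilon+\log 3,2\beta}\,\frac{1+e^\varepsilon}{\varepsilon}\;<\;1
\end{align*}
by \eqref{Eq: uniqueness for KMS quantum states - condition for uniqueness}. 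Then $\mathds{1}-\mathcal{K}_\beta$ is invertible on $\mathfrak{B}$, the solution is independent of $\omega$, hence $\omega(A_S)$ is uniquely determined for all finite $S$ and $A_S\in\mathcal{A}_S^{\circ}$, hence $\omega$ is unique on $\mathcal{A}_\Gamma$; together with existence this proves the theorem. I expect the genuinely delicate points to be: (a) making the Dyson--Duhamel expansion and the continuation to $i\beta$ rigorous at the level of the KS identity and identifying precisely the weighted norm $\|\cdot\|_{\varepsilon+\log 3,2\beta}$ of $\bar{\Phi}$ that renders every sum absolutely convergent; and (b) the bookkeeping of the re-decomposition step applied to $\bar{\Phi}_X$ times the running observable, which is what must simultaneously produce uniformity in $d$ and the sharp constant $\tfrac16\tfrac{\varepsilon}{1+e^\varepsilon}$.
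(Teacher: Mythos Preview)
Your overall architecture is the paper's: the same $\omega_x$-centered (``$\eta$-free'') decomposition of local observables, a Kirkwood--Salzburg identity derived from the KMS condition, and an operator-norm bound $<1$ forcing uniqueness. The gap is in the step where you extract the recursion. You write $A_S=a_x B$ with $a_x\in\mathcal{A}_x^\circ$, $B\in\mathcal{A}_{S\setminus x}^\circ$, and apply KMS to $a_x$; but a generic element of $\mathcal{A}_S^\circ=\bigotimes_{y\in S}\mathcal{A}_y^\circ$ is not a simple tensor, and decomposing it over a basis of $\mathcal{A}_x^\circ$ is precisely what reinstates the $d$-dependent factors the theorem is meant to avoid (this is exactly the phenomenon flagged in the introduction with the toy estimate $\|\check{\omega}_\Lambda\|\leq (d+1)^{2|\Lambda|}$). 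Your assertion that the leading Dyson term ``reorganizes thanks to the single-site KMS property of $\omega_x$'' is also unsubstantiated: the full KMS state $\omega$ need not factor through $\omega_x$ at site $x$, so $\omega(B\,\tau^\Psi_{i\beta}(a_x))$ has no evident relation to $\omega(A_S)$ or to any simpler quantity.

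The paper never factorizes $\tilde{A}_\Lambda$. Its key device is to express the single-site Gibbs state as a Haar average over the unitary group of $\mathcal{A}_x$, namely $\eta_x(\,\cdot\,)=\fint_{\mathcal{U}_x}U_x e^{-\beta\Psi_x}(\,\cdot\,)U_x^*\,\mathrm{d}U_x$, so that $0=\omega(\eta_x(\tilde{A}_\Lambda))=\fint\omega(U_x e^{-\beta\Psi_x}\tilde{A}_\Lambda U_x^*)\,\mathrm{d}U_x$; KMS is then applied to the \emph{single} operator $U_x e^{-\beta\Psi_x}$, whose Dyson expansion for $\tau^\Gamma_{i\beta}$ is anchored at the one-point set $\{x\}$. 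After re-averaging, the zeroth Dyson term returns exactly $\omega(\tilde{A}_\Lambda)$, and the higher terms are controlled by the refined decomposition of Remark~\ref{Rmk: decomposition in eta-free elements - improved}: because $\tilde{A}_\Lambda$ is already $\eta$-free on $\Lambda$, the re-decomposition of the product runs only over subsets of $S_n=X_1\cup\cdots\cup X_n$, yielding the bound $2^{|X|}$ rather than $2^{|X\cup\Lambda|}$. A related difference is the Banach space: the paper uses the \emph{unweighted} norm $\sup_\Lambda\sup_{\tilde{A}_\Lambda}|f_\Lambda(\tilde{A}_\Lambda)|/\|\tilde{A}_\Lambda\|$, so every state sits in $\underline{\mathsf{X}}$ with norm $1$ and no iteration-with-remainder argument is needed. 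Your exponential weight $e^{(\varepsilon+\log 3)(|S|-1)}$ appears instead inside $\|\bar{\Phi}\|_{\varepsilon+\log 3,2\beta}$ via Lemma~\ref{Lem: sum on subsets estimate}; and because the Dyson chain starts from a single site, that lemma contributes only $e^{\varepsilon}$, which combined with the geometric series in $n$ is exactly the origin of the threshold $\tfrac{1}{6}\,\tfrac{\varepsilon}{1+e^\varepsilon}$.
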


The combination of Theorems \ref{Thm: uniqueness for KMS quantum states}-\ref{Thm: uniqueness for KMS classical states} establishes the existence of a common high-temperature regime in which uniqueness is granted for both KMS classical and quantum states.

\begin{corollary}\label{Cor: absence of classical and quantum phase transitions}
	Let assume that $\varphi$ fulfils \eqref{Eq: assumption on potential - C1 norm on potentials} and
	\begin{align}\label{Eq: condition for absence of classical and quantum phase transitions}
	\exists\varepsilon\mid
	\beta\|\bar{\varphi}\|_{\varepsilon+\log 3,2\beta}
	<\frac{1}{6}\frac{\varepsilon}{1+e^\varepsilon}\,,
	\end{align}
	where $\|\bar{\varphi}\|_{\varepsilon+\log 3,2\beta}$ is defined as \eqref{Eq: assumption on potential - norm on potentials strengthened}.
	Let $\Phi=Q_\Gamma(\varphi)$ be the associated family of quantized potentials.
	Then there exists a unique $(\beta,\delta_\Gamma)$-KMS classical state on $\mathscr{A}_\Gamma$ and a unique $(\beta,\tau^\Gamma)$-KMS quantum state on $\mathcal{A}_\Gamma$ for all $j\in\mathbb{Z}_+/2$.
\end{corollary}
Corollary \ref{Cor: absence of classical and quantum phase transitions} applies for all $\beta<\hat{\beta}_{\textsc{u}}$ where $\hat{\beta}_{\textsc{u}}$ is the unique value for such that
\[
\hat{\beta}_{\textsc{u}}\|\bar{\varphi}\|_{\varepsilon+\log 3,2\hat{\beta}_{\textsc{u}}}
=\frac{1}{6}\frac{\varepsilon}{1+e^\varepsilon}\,.
\]
The latter identifies a common non-trivial subcritical regime for both classical and quantum systems.
As explained in \cite{Drago_Pettinari_Van_de_Ven_2025}, Corollary \ref{Cor: absence of classical and quantum phase transitions} implies that, within the suitable framework of strict deformation quantization and considering the common subcritical regime, the semiclassical limit of the KMS quantum state coincides with the KMS classical state.

The paper is organized as follows.
Section \ref{Sec: Uniqueness result for quantum KMS states} is devoted to the statement and proof of Theorem \ref{Thm: uniqueness for KMS quantum states} (from which the proof of Theorem \ref{Thm: uniqueness for KMS quantum states - with commutation} also descends, \textit{cf.} Remark \ref{Rmk: proof of uniqueness for KMS quantum states - with commutation}).
This requires properly setting the stage for the quasi-local algebra of observables $\mathcal{A}_\Gamma$ and for the dynamics $\tau^\Gamma$ associated with the chosen family $\Phi$ of quantum potentials.
This construction slightly departs from the standard presentation of the topic, \textit{cf.} Section \ref{Subsec: dynamics for non-commuting Psi,Phi}.
In Section \ref{Subsec: decomposition in eta-free elements} we introduce a suitable decomposition of elements in $\mathcal{A}_\Gamma$, which plays a central role in the proof of Theorem \ref{Thm: uniqueness for KMS quantum states}.
The proof itself is presented in detail in Section \ref{Subsec: proof of uniqueness for quantum KMS states}.
Section \ref{Sec: consequences of (main) Theorem} discusses several consequences of Theorem \ref{Thm: uniqueness for KMS quantum states}.
In particular, in Section \ref{Subsec: Comparison of the subcritical inverse temperatures} we compare our subcritical inverse temperature $\beta_{\textsc{u}}$ with those obtained in \cite{Bratteli_Robinson_97}.
Section \ref{Subsec: common subcritical region for quantum and classical systems} is devoted to the proof of Theorem \ref{Thm: uniqueness for KMS classical states}, which serves as the classical analogue of Theorem \ref{Thm: uniqueness for KMS quantum states}.
This demonstrates that our proof is sufficiently robust to extend to the classical setting.
The paper concludes with the proof of Corollary \ref{Cor: absence of classical and quantum phase transitions}.

\paragraph{Acknowledgements.}
We thank D. Ueltschi for valuable discussions on this project and for his careful reading of an earlier version of this manuscript.
N.D. and L. P. are grateful for the support of the National Group of Mathematical Physics (GNFM-INdAM).
N.D. was supported by the MUR Excellence Department Project 2023-2027 awarded to the Department of Mathematics of the University of Genova, CUPD33C23001110001.

\paragraph{Data availability statement.}
Data sharing is not applicable to this article as no new data were created or analysed in this study.

\paragraph{Conflict of interest statement.}
The authors certify that they have no affiliations with or involvement in any
organization or entity with any financial interest or non-financial interest in
the subject matter discussed in this manuscript.


\section{Uniqueness result for quantum KMS states}
\label{Sec: Uniqueness result for quantum KMS states}

In this section we will state and prove the uniqueness result for KMS quantum states on quantum spin lattice system, \textit{cf.} Theorems \ref{Thm: uniqueness for KMS quantum states - with commutation}-\ref{Thm: uniqueness for KMS quantum states}.
In fact, we will only focus on the proof of Theorem \ref{Thm: uniqueness for KMS quantum states} because the one of Theorem \ref{Thm: uniqueness for KMS quantum states - with commutation} proceeds similarly.
Specifically, in section \ref{Subsec: quantum quasi-local algebra} we briefly recall the notion of quasi-local algebra $\mathcal{A}_\Gamma$ on an arbitrary countable set $\Gamma$ and the related notion of KMS quantum states while stating the precise content of Theorem \ref{Thm: uniqueness for KMS quantum states}.
Section \ref{Subsec: decomposition in eta-free elements} develops a purely algebraic decomposition of local observables in "$\eta$-free" elements, where $\eta=\{\eta_x\}_{x\in\Lambda}$ is a given family of states on the algebra $\mathcal{A}_x$ of local observables at site $x\in\Gamma$.
This decomposition is interesting per se, but it will turn rather helpful in the proof of Theorem \ref{Thm: uniqueness for KMS quantum states}, which is described in section \ref{Subsec: proof of uniqueness for quantum KMS states}.

\subsection{Quantum quasi-local algebra}
\label{Subsec: quantum quasi-local algebra}

In this section we will recall the setting for the description of the $C^*$-algebra of quasi-local observables over an arbitrary countable set $\Gamma$.
To begin with we consider, for any $x\in\Gamma$, a finite dimensional Hilbert space $\mathcal{H}_x$ and set, for $\Lambda\Subset\Gamma$,
\begin{align*}
	\mathcal{A}_\Lambda
	:=\mathcal{B}\Big(
	\bigotimes_{x\in\Lambda}\mathcal{H}_x\Big)\,.
\end{align*}
Elements of $\mathcal{A}_\Lambda$ are interpreted as observables localized within $\Lambda$.
There is a standard inclusion $\mathcal{A}_\Lambda\hookrightarrow\mathcal{A}_{\Lambda'}$, $\Lambda\subseteq\Lambda'\Subset\Gamma$, defined by
$A_\Lambda
\mapsto
A_\Lambda\bigotimes_{x\in\Lambda'\setminus\Lambda}1_x$.
With a common abuse of notation we will identify $\mathcal{A}_\Lambda$ with its image under this inclusion.
Out of the net of $C^*$-algebras $\{\mathcal{A}_\Lambda\}_{\Lambda\Subset\Gamma}$ we may consider the corresponding $C^*$-direct limit, which will be denoted by $\mathcal{A}_\Gamma$.
Elements of $\mathcal{A}_\Gamma$ are interpreted as quasi-local observables.
Since there will be no risk of confusion, we will denote by $\|\;\|$ the norms of either $\mathcal{A}_\Lambda$ or $\mathcal{A}_\Gamma$.
For later convenience we set $\dot{\mathcal{A}}_\Gamma:=\bigcup_{\Lambda\Subset\Gamma}\mathcal{A}_\Gamma$: The latter is a dense $*$-subalgebra of $\mathcal{A}_\Gamma$.

Next, one introduces a \textbf{dynamics} on $\mathcal{A}_\Gamma$, that is, a strongly continuous one-parameter group of automorphisms $\tau^\Gamma\colon\mathbb{R}\to\operatorname{Aut}(\mathcal{A}_\Gamma)$ on $\mathcal{A}_\Gamma$.
This is usually obtained by considering a family of potentials $\Phi=\{\Phi_\Lambda\}_{\Lambda\Subset\Gamma}$ so that, $\Phi_\varnothing=0$ and $\Phi_\Lambda=\Phi_\Lambda^*\in\mathcal{A}_\Lambda$ for all $\Lambda\Subset\Gamma$.
For a finite region $\Lambda\Subset\Gamma$ one defines a dynamics $\tau^\Lambda$ for the observables localized within $\Lambda$:
\begin{align*}
	\tau^\Lambda\colon\mathbb{R}\to\operatorname{Aut}(\mathcal{A}_\Lambda)\,,
	\qquad
	\tau^\Lambda_t(A_\Lambda)
	:=e^{itH_\Lambda}A_\Lambda e^{-itH_\Lambda}\,,
	\qquad
	H_\Lambda
	:=\sum_{X\subseteq\Lambda}\Phi_X\,.
\end{align*}
At this stage, one would like to obtain a dynamics on $\mathcal{A}_\Gamma$ by considering a suitable limit $\Lambda\uparrow\Gamma$ of the dynamics $\tau^\Lambda$ localized in finite regions.
This is possible under additional assumptions on $\Phi$.
Usually \cite{Bratteli_Robinson_97,Lenci_Rey_Bellet_2005}, it is assumed that
\begin{align} \label{Eq: assumption on potential - norm on potentials for dynamics}
	\exists\varepsilon>0\mid
	\|\Phi\|_\varepsilon<\infty\,,
\end{align}
where $\|\Phi\|_\varepsilon$ has been defined in \eqref{Eq: assumption on potential - norm on potentials}.
Specifically, out of assumption \eqref{Eq: assumption on potential - norm on potentials} one obtains the existence of a dynamics $\tau^\Gamma$ on $\mathcal{A}_\Gamma$, \textit{cf.} \cite[Thm. 6.2.4]{Bratteli_Robinson_97}.
The infinitesimal generator $\delta_\Gamma$ of $\tau^\Gamma$ reads
\begin{align}\label{Eq: dynamics on quasi-local algebra - infinitesimal generator}
\delta_\Gamma
\colon\dot{\mathcal{A}}_\Gamma
\to\mathcal{A}_\Gamma\,,
\qquad
\delta_\Gamma(A_\Lambda)
=i\sum_{\substack{X\Subset\Gamma\\X\cap\Lambda\neq\varnothing}}
[\Phi_X,A_\Lambda]
\quad\forall A_\Lambda\in\mathcal{A}_\Lambda
\subset\dot{\mathcal{A}}_\Gamma\,.
\end{align}
Moreover, one has
\[\tau^\Gamma_t
=\operatorname{s-}\lim_{\Lambda\uparrow\Gamma}
\tau^\Lambda_t\,.
\]
The fact that the dynamics $\tau^\Gamma$ is the strong limit of the family of dynamics $\tau^\Lambda$ has important consequences.
In particular, it ensures the existence of $\tau^\Gamma$-KMS states at any inverse temperature, \textit{cf.} \cite[Prop. 6.2.15]{Bratteli_Robinson_97}.

Assumption \eqref{Eq: assumption on potential - norm on potentials for dynamics} does not distinguish between contributions to the potential $\Phi$ coming from a single-site potential and those coming from multilocal potentials.
In fact, it is customary to split $\Phi$ in two families $\Psi$, $\bar{\Phi}$ given by
\begin{align}\label{Eq: decomposition of quantum potential}
\bar{\Phi}_\Lambda
:=\begin{dcases}
\Phi_\Lambda
&|\Lambda|\geq 2
\\
0
&|\Lambda|=1
\end{dcases}\,,
\qquad
\Psi_x:=\Phi_{\{x\}}\,.
\end{align}
This decomposition reflects the fact that the family $\bar{\Phi}:=\{\bar{\Phi}_\Lambda\}_{\Lambda\Subset\Gamma}$ collects all non-local potentials, whereas $\Psi=\{\Psi_x\}_{x\in\Gamma}$ contains only single-site potentials
\footnote{
	The results of this paper admit a slight generalization. Indeed, one may consider a decomposition where $\Psi=\{\Phi_{\Lambda_n}\}_n$ accounts for contributions from a family $\{\Lambda_n\}_n$ of disjoint finite regions covering $\Gamma$, that is, such that $\Lambda_n\cap\Lambda_m=\varnothing$ and $\cup_n\Lambda_n=\Gamma$.}.

At this stage, one may argue that the single-site potential $\Psi$ defines a dynamics $\tau^\Psi$ on $\mathcal{A}^\Gamma$ (regardless of condition \eqref{Eq: assumption on potential - norm on potentials for dynamics}) by setting
\begin{align}\label{Eq: dynamics on quasi-local algebra - single-site dynamics}
\tau^\Psi_t(A_\Lambda)
:=e^{it\Psi_\Lambda}A_\Lambda e^{-it\Psi_\Lambda}\,,
\qquad
\Psi_\Lambda
:=\sum_{x\in\Lambda}\Psi_x\,.
\end{align}
Notice that $\tau^\Psi_t$ leads to a well-defined dynamics on $\mathcal{A}^\Gamma$ because it preserves local observables ---that is, $\tau^\Psi_t(\mathcal{A}_\Lambda)\subset\mathcal{A}_\Lambda$--- and it commutes with the inclusion maps $\mathcal{A}_{\Lambda_0}\hookrightarrow\mathcal{A}_\Lambda$.
Roughly speaking, no thermodynamic limit is necessary because $\tau^\Psi$ preserves local observables.

It seems reasonable to seek for mild conditions on $\Psi$ and $\bar{\Phi}$ so that the dynamics $\tau^\Gamma$ is still well-defined regardeless the boundedness of $\|\Psi_x\|$ on $x\in\Gamma$ ---which is implied by \eqref{Eq: assumption on potential - norm on potentials for dynamics}.
A possible set of conditions is provided by
\begin{align}
\label{Eq: assumption on potential - commutation}
&[\bar{\Phi}_\Lambda,\Psi_x]=0
\qquad
\forall x\in\Gamma\,,\,
\forall\Lambda\Subset\Gamma\,,
\\
	\label{Eq: assumption on potential - norm on multilocal potentials}
&\exists\varepsilon>0\mid
\|\bar{\Phi}\|_\varepsilon<\infty\,.
\end{align}
Indeed, condition \eqref{Eq: assumption on potential - norm on multilocal potentials} ensures, again by \cite[Thm. 6.2.4]{Bratteli_Robinson_97}, the existence of a dynamics $\bar{\tau}^\Gamma$ with infinitesimal generator $\bar{\delta}_\Gamma$ given by \eqref{Eq: dynamics on quasi-local algebra - infinitesimal generator} (with $\Phi$ replaced by $\bar{\Phi}$) and such that
$\bar{\tau}^\Gamma_t
=\operatorname{s-}\lim_{\Lambda\uparrow\Gamma}
\bar{\tau}^\Lambda_t$.
Moreover, assumption \eqref{Eq: assumption on potential - commutation} guarantees that $\tau^\Lambda_t=\bar{\tau}^\Lambda_t\tau^\Psi_t$, therefore, the dynamics $\tau^\Gamma$ can be obtained by
\begin{align}\label{Eq: dynamics on quasi-local algebra}
\tau^\Gamma_t
:=\bar{\tau}^\Gamma_t
\circ\tau^\Psi_t
=\operatorname{s-}\lim_{\Lambda\uparrow\Gamma}
\bar{\tau}^\Lambda_t
\circ\tau^\Psi_t
=\operatorname{s-}\lim_{\Lambda\uparrow\Gamma}
\tau^\Lambda_t\,.
\end{align}
with infinitesimal generator \eqref{Eq: dynamics on quasi-local algebra - infinitesimal generator}.

\subsection{Dynamics for non-commuting $\Psi,\bar{\Phi}$}
\label{Subsec: dynamics for non-commuting Psi,Phi}

Assumptions \eqref{Eq: assumption on potential - norm on potentials for dynamics} 
and \eqref{Eq: assumption on potential - commutation}-\eqref{Eq: assumption on potential - norm on potentials} 
are well known and widely used in the literature \cite{Bratteli_Robinson_97,Lenci_Rey_Bellet_2005}.
As discussed in the introduction, however, the present work adopts a slightly different framework.
In particular, we aim to dispense with assumption \eqref{Eq: assumption on potential - commutation}.

This choice is motivated by the results presented in 
Sections \ref{Subsec: Uniqueness result for classical KMS states}--\ref{Subsec: common subcritical region for quantum and classical systems}, 
where we establish a classical analogue of Theorem \ref{Thm: uniqueness for KMS quantum states} 
(\textit{cf.} Theorem \ref{Thm: uniqueness for KMS classical states}), 
as well as a common high-temperature regime ensuring uniqueness of both classical and quantum KMS states 
(\textit{cf.} Corollary \ref{Cor: absence of classical and quantum phase transitions}).
To obtain the latter result, it is necessary to compare the assumptions in 
Theorems \ref{Thm: uniqueness for KMS quantum states} 
and \ref{Thm: uniqueness for KMS classical states}.
This comparison shows that assumption \eqref{Eq: assumption on potential - commutation} 
is difficult to accommodate within the framework of strict deformation quantization, \textit{cf.} Section \ref{Subsec: common subcritical region for quantum and classical systems}.

Dropping assumption \eqref{Eq: assumption on potential - commutation} still allows one to define a meaningful dynamics $\tau^\Gamma$ on $\mathcal{A}_\Gamma$.  
Indeed, it suffices to define $\tau^\Gamma$ through its associated Dyson series.
For $A_\Lambda\in\mathcal{A}_\Lambda$, $\Lambda\Subset\Gamma$, we set
\begin{align}\label{Eq: dynamics on quasi-local algebra - Dyson series}
	\tau^\Gamma_t(A_\Lambda)
	:=\sum_{n\geq 0}i^n
	\int_0^t\mathrm{d}t_1
	\cdots
	\int_0^{t_{n-1}}\mathrm{d}t_n
	\sum_{X_1,\ldots,X_n:\Lambda}
	\operatorname{ad}_{\tau^\Psi_{t_1}\bar{\Phi}_{X_1}}
	\cdots
	\operatorname{ad}_{\tau^\Psi_{t_n}\bar{\Phi}_{X_n}}
	\tau^\Psi_t(A_\Lambda)\,,
\end{align}
where $\operatorname{ad}_A(B):=[A,B]$ while $\sum_{X_1,\ldots,X_n:\Lambda}$ denotes the sum over $X_1,\ldots,X_n\Subset\Gamma$ subjected to the constraint $X_j\cap S_{j-1}^\Lambda\neq\varnothing$ for all $j\in\{1,\ldots,n\}$, where
\begin{align}\label{Eq: hyerarchy of sets}
	S_0^\Lambda:=\Lambda\,,
	\qquad
	S_j^\Lambda:=X_j\cup S_{j-1}^\Lambda\,.
\end{align}
Convergence of the series above is ensured by assumption \eqref{Eq: assumption on potential - norm on potentials}, together with the standard estimate stated in the following lemma, which we include here for later reference.

\begin{lemma}\label{Lem: sum on subsets estimate}
	Let $(\alpha_X)_{X\Subset\Gamma}$ be a sequence of positive numbers and $\Lambda\Subset\Gamma$.
	Then for all $\varepsilon>0$ we have
	\begin{align}\label{Eq: sum on subsets estimate}
		\sum_{X_1,\ldots,X_n:\Lambda}
		\prod_{j=1}^n\alpha_{X_j}
		\leq
		n!\varepsilon^{-n}e^{\varepsilon|\Lambda|}
		\Big(\sup_{x\in\Gamma}\sum_{\substack{X\Subset\Gamma\\x\in X}}e^{\varepsilon(|X|-1)}\alpha_{X}\Big)^n\,.
	\end{align}
\end{lemma}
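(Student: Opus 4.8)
## Proof Strategy for Lemma \ref{Lem: sum on subsets estimate}

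The plan is to control the constrained sum $\sum_{X_1,\ldots,X_n:\Lambda}$ by successively "peeling off" one region at a time, starting from $X_n$ and working backwards, exploiting the nested structure $S_0^\Lambda=\Lambda\subseteq S_1^\Lambda\subseteq\cdots\subseteq S_{n-1}^\Lambda$. The key combinatorial observation is that the constraint $X_j\cap S_{j-1}^\Lambda\neq\varnothing$ means $X_j$ must contain at least one point of $S_{j-1}^\Lambda$, and $|S_{j-1}^\Lambda|\leq|\Lambda|+\sum_{i<j}(|X_i|-1)$ since each previous $X_i$ adds at most $|X_i|-1$ new points to the union (it already shares at least one point). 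The exponential weight $e^{\varepsilon(|X|-1)}$ is precisely what makes this telescoping argument close.

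First I would fix $X_1,\ldots,X_{n-1}$ and estimate the innermost sum over $X_n$: since $X_n$ must intersect $S_{n-1}^\Lambda$, we have
\begin{align*}
	\sum_{X_n:\,X_n\cap S_{n-1}^\Lambda\neq\varnothing}\alpha_{X_n}
	\leq
	\sum_{y\in S_{n-1}^\Lambda}\sum_{X_n\ni y}\alpha_{X_n}
	\leq
	|S_{n-1}^\Lambda|\sup_{x}\sum_{X\ni x}\alpha_X\,.
\end{align*}
To make the induction work I would instead carry the weighted version, bounding $\sum_{X_n:\Lambda}e^{\varepsilon(|X_n|-1)}\alpha_{X_n}\leq|S_{n-1}^\Lambda|\,C_\varepsilon$ where $C_\varepsilon:=\sup_x\sum_{X\ni x}e^{\varepsilon(|X|-1)}\alpha_X$, and then absorb the missing weights. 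Concretely, writing $\prod_{j=1}^n\alpha_{X_j}=\big(\prod_j e^{-\varepsilon(|X_j|-1)}\big)\prod_j e^{\varepsilon(|X_j|-1)}\alpha_{X_j}$, and using $\prod_{j<n}e^{\varepsilon(|X_j|-1)}\geq 1$ trivially, one sees the factor $|S_{n-1}^\Lambda|\leq|\Lambda|+\sum_{j=1}^{n-1}(|X_j|-1)$ is dominated by $\varepsilon^{-1}e^{\varepsilon|\Lambda|}\prod_{j=1}^{n-1}e^{\varepsilon(|X_j|-1)}$ after using the elementary inequality $t\leq\varepsilon^{-1}e^{\varepsilon t}$ applied to $t=|\Lambda|+\sum_{j<n}(|X_j|-1)$. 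This is the mechanism that converts "size of the accumulated union" into an exponential weight that can be redistributed.

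Then I would set up the induction on $n$. Define $A_n(\Lambda):=\sum_{X_1,\ldots,X_n:\Lambda}\prod_{j=1}^n\alpha_{X_j}$; the base case $n=0$ gives $A_0(\Lambda)=1\leq e^{\varepsilon|\Lambda|}$. For the inductive step, I would perform the sum over $X_1$ first (the region constrained only by $X_1\cap\Lambda\neq\varnothing$), noting that once $X_1$ is chosen, the remaining sum over $X_2,\ldots,X_n$ is exactly $A_{n-1}(S_1^\Lambda)$ with $S_1^\Lambda=X_1\cup\Lambda$. By the inductive hypothesis $A_{n-1}(S_1^\Lambda)\leq(n-1)!\,\varepsilon^{-(n-1)}e^{\varepsilon|S_1^\Lambda|}C_\varepsilon^{\,n-1}$, and $e^{\varepsilon|S_1^\Lambda|}\leq e^{\varepsilon|\Lambda|}e^{\varepsilon(|X_1|-1)}$ since $|S_1^\Lambda|\leq|\Lambda|+|X_1|-1$. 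Summing over $X_1\ni y$ for $y\in\Lambda$ and taking the supremum then yields the factor $|\Lambda|\cdot C_\varepsilon$, but $|\Lambda|\leq\varepsilon^{-1}e^{\varepsilon|\Lambda|}$ would over-count; instead one keeps $|\Lambda|$ and observes that $n\cdot(n-1)!=n!$ while the extra $|\Lambda|$ versus the target $e^{\varepsilon|\Lambda|}$ is handled because we already have one $e^{\varepsilon|\Lambda|}$ in reserve and $|\Lambda|e^{\varepsilon|\Lambda|}\leq\varepsilon^{-1}e^{2\varepsilon|\Lambda|}$ — so in fact the cleanest route is to prove the slightly stronger-looking but equivalent statement directly by the peeling estimate above rather than a clean induction.

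The main obstacle is bookkeeping: making sure the factor of $n!$ emerges correctly (it comes from the $n$ choices of "which accumulated point each new region attaches to", i.e. from $|S_{j-1}^\Lambda|$ growing like $j$ in the worst uniform case, whose product is $n!$) while the $\varepsilon^{-n}$ comes from converting each such linear factor into an exponential via $t\leq\varepsilon^{-1}e^{\varepsilon t}$, and the single $e^{\varepsilon|\Lambda|}$ survives because the $|\Lambda|$-dependence only enters additively in each $|S_{j-1}^\Lambda|$ and is absorbed once. I would therefore organize the argument as: (i) bound each $|S_{j-1}^\Lambda|$ by $|\Lambda|+\sum_{i<j}(|X_i|-1)$; (ii) perform the sums over $X_n,X_{n-1},\ldots,X_1$ in that order, each time using the intersection-at-a-point trick and pulling out $|S_{j-1}^\Lambda|\cdot C_\varepsilon$ while leaving $e^{\varepsilon(|X_j|-1)}$ attached to $\alpha_{X_j}$; (iii) collect the $n$ leftover linear factors, bound their product by $n!\big(\frac1n\sum\ldots\big)^n$-type reasoning or directly by $n!\,\varepsilon^{-n}e^{\varepsilon|\Lambda|}$ after redistributing the $e^{\varepsilon(|X_j|-1)}$ weights and applying $t\leq\varepsilon^{-1}e^{\varepsilon t}$. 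This is entirely standard — it is the Kirkwood–Salzburg / cluster-expansion counting bound — so no genuinely hard step arises; the care is purely in tracking constants.
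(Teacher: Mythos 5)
Your overall strategy coincides with the paper's: the intersection-at-a-point bound $\sum_{X\cap S\neq\varnothing}\alpha_X\leq|S|\sup_{x}\sum_{X\ni x}\alpha_X$, the cardinality bound $|S_{j}^\Lambda|\leq|\Lambda|+\sum_{i\leq j}(|X_i|-1)$, and the conversion of the leftover polynomial factors into exponential weights. The gap is in the last step, which you flag yourself as the ``main obstacle'' and then leave unresolved between two mechanisms that do not actually close. Applying $t\leq\varepsilon^{-1}e^{\varepsilon t}$ separately to each of the $n$ factors $|S_{j-1}^\Lambda|\leq|\Lambda|+\sum_{i<j}(|X_i|-1)$ produces $\varepsilon^{-n}e^{n\varepsilon|\Lambda|}\prod_{i}e^{(n-i)\varepsilon(|X_i|-1)}$: an unwanted $e^{n\varepsilon|\Lambda|}$, weights $e^{(n-i)\varepsilon(|X_i|-1)}$ whose exponent multiplier grows with $n$ (so they cannot be absorbed into the fixed-$\varepsilon$ norm appearing in the statement), and no $n!$ at all. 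Your inductive variant suffers the same over-counting, as you concede when you write $|\Lambda|e^{\varepsilon|\Lambda|}\leq\varepsilon^{-1}e^{2\varepsilon|\Lambda|}$ --- the exponent of $e^{\varepsilon|\Lambda|}$ doubles at each step instead of staying equal to one. Also, the heuristic that the $n!$ arises from ``$|S_{j-1}^\Lambda|$ growing like $j$'' is not where it comes from.

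The missing (one-line) idea is to bound all $n$ factors by the \emph{common} quantity $x:=|\Lambda|+\sum_{j=1}^{n}(|X_j|-1)$, which dominates every $|S_{j}^\Lambda|$, so that $\prod_{j=0}^{n-1}|S_j^\Lambda|\leq x^n$, and then to apply the elementary inequality $x^n\leq n!\,\varepsilon^{-n}e^{\varepsilon x}$ (equivalently $e^{\varepsilon x}\geq(\varepsilon x)^n/n!$) exactly \emph{once}. This yields $n!\,\varepsilon^{-n}e^{\varepsilon|\Lambda|}\prod_{j=1}^{n}e^{\varepsilon(|X_j|-1)}$ with a single $e^{\varepsilon|\Lambda|}$ and a single weight $e^{\varepsilon(|X_j|-1)}$ per region, each of which is then absorbed into the weighted sup-sum defining the right-hand side of \eqref{Eq: sum on subsets estimate}. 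This is precisely how the paper closes the argument: the $n!$ is produced by the Taylor-remainder inequality applied to one exponential, not by the combinatorics of the attachment points or by iterating $t\leq\varepsilon^{-1}e^{\varepsilon t}$.
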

\begin{proof}
	To begin with one observe that
	\begin{align*}
		\sum_{X\cap S\neq\varnothing}\alpha_X
		\leq|S|\sup_{x\in\Gamma}\sum_{X\ni x}\alpha_X\,,
	\end{align*}
	which immediately leads to
	\begin{multline*}
		\sum_{X_1,\ldots,X_n:\Lambda}
		\prod_{j=1}^n\alpha_{X_j}
		=\sum_{\substack{X_1\Subset\Gamma\\X_1\cap S_0^\Lambda\neq\varnothing}}
		\alpha_{X_1}
		\sum_{\substack{X_2\Subset\Gamma\\X_2\cap S_1^\Lambda\neq\varnothing}}
		\alpha_{X_2}
		\ldots
		\sum_{\substack{X_n\Subset\Gamma\\X_n\cap S_{n-1}^\Lambda\neq\varnothing}}
		\alpha_{X_n}
		\\
		\leq\sup_{x_1\in\Gamma}\sum_{\substack{X_1\Subset\Gamma\\x_1\in X_1}}\alpha_{X_1}
		\ldots
		\sup_{x_n\in\Gamma}\sum_{\substack{X_n\Subset\Gamma\\x_n\in X_n}}\alpha_{X_n}
		\prod_{j=0}^{n-1}|S_j^\Lambda|\,.
	\end{multline*}
	Moreover, by induction we have $|S_j^\Lambda|\leq(|\Lambda|+|X_1|+\ldots+|X_j|-j)$.
    Using the elementary bound $x^n \leq n!\,\varepsilon^{-n} e^{\varepsilon x}$, $x\geq 0$, we therefore obtain
	\begin{align*}
		\prod_{j=0}^{n-1}|S_j^\Lambda|
		\leq(|\Lambda|+|X_1|+\ldots+|X_n|-n)^n
		\leq n!\varepsilon^{-n}e^{\varepsilon|\Lambda|}
		\prod_{j=1}^ne^{\varepsilon(|X_j|-1)}\,.
	\end{align*}
	Inserting this estimate in the above inequality proves \eqref{Eq: sum on subsets estimate}.
\end{proof}
Lemma \ref{Lem: sum on subsets estimate} implies that
\begin{align*}
	\|\tau^\Gamma_t(A_\Lambda)\|
	\leq\|A_\Lambda\|\sum_{n\geq 0}
	\frac{(2|t|)^n}{n!}
	\sum_{X_1,\ldots,X_n:\Lambda}
	\prod_{j=1}^n\|\bar{\Phi}_{X_j}\|
	\leq\|A_\Lambda\|e^{\varepsilon|\Lambda|}\sum_{n\geq 0}
	\Big(\frac{2|t|}{\varepsilon}\|\bar{\Phi}\|_\varepsilon\Big)^n\,,
\end{align*}
proving convergence of \eqref{Eq: dynamics on quasi-local algebra - Dyson series} for $2|t|\|\bar{\Phi}\|_\varepsilon<\varepsilon$ ---we are still assuming \eqref{Eq: assumption on potential - norm on potentials}.
With a similar estimate one proves that $t\mapsto\tau^\Gamma_t(A_\Lambda)$ is strongly continuous, moreover, that it fulfills $\frac{\mathrm{d}}{\mathrm{d}t}\tau^\Gamma_t(A_\Lambda)=i\delta_\Gamma(\tau^\Gamma(A_\Lambda))$.
This implies that $\tau^\Gamma$ extends to a one-parameter group of automorphisms.

The drawback of this construction is that we lack control over the thermodynamic limit.  
In particular, it is not clear \textit{a priori} whether $\tau^\Gamma$ coincides with the strong limit of the local dynamics $\tau^\Lambda$ as $\Lambda\uparrow\Gamma$.
This is problematic, since it prevents us from guaranteeing the existence of $\tau^\Gamma$-KMS states.
Indeed, if 
\begin{align*}
	\tau^\Gamma_t=\operatorname{s-}\lim_{\Lambda\uparrow\Gamma}\tau^\Lambda_t\,,
\end{align*}
then every weak$^*$-limit point of a sequence $(\omega_\Lambda^\beta)_\Lambda$ of $(\beta,\tau^\Lambda)$-KMS states is automatically a $(\beta,\tau^\Gamma)$-KMS state, \textit{cf.} \cite[Prop. 6.2.15]{Bratteli_Robinson_97}.
Existence of $\tau^\Lambda$-KMS states is not an issue because for every finite $\Lambda\Subset\Gamma$ there exists a unique $(\beta,\tau^\Lambda)$-KMS state, namely the Gibbs state $\omega_\Lambda^\beta(A_\Lambda)=\operatorname{tr}(e^{-\beta H_\Lambda}A_\Lambda)/\operatorname{tr}(e^{-\beta H_\Lambda})$.
The existence of weak$^*$-limit points for the sequence $(\omega_\Lambda^\beta)_\Lambda$ follows from a weak$^*$-compactness argument.

To ensure that $\tau^\Gamma$ is indeed the strong limit of $\tau^\Lambda$ as $\Lambda\uparrow\Gamma$, we need to strengthen assumption \eqref{Eq: assumption on potential - norm on potentials}.
Specifically we will assume that
\begin{align}\label{Eq: assumption on potential - norm on potentials strengthened for dynamics}
	\exists\varepsilon,\zeta>0\mid
	\|\bar{\Phi}\|_{\varepsilon,\zeta}
	<+\infty\,,
\end{align}
where $\|\bar{\Phi}\|_{\varepsilon,\zeta}$ has been defined in \eqref{Eq: assumption on potential - norm on potentials strengthened}.
As already mentioned in the introduction, it is worth stressing that $\|\bar{\Phi}\|_{\varepsilon,\zeta}$ implicitly depends on $\Psi$.
Finiteness of $\|\bar{\Phi}\|_{\varepsilon,\zeta}$ ensures that the possible growth of $\|\Psi\|_X$ is always compensated by $\|\bar{\Phi}_X\|$.
This strong requirement compensates for the lack of commutativity between $\Psi$ and $\bar{\Phi}$.

Notice that, if $\|\Psi\|:=\sup_{x\in\Gamma}\|\Psi_x\|<\infty$, condition \eqref{Eq: assumption on potential - norm on potentials strengthened for dynamics} essentially reduce to \eqref{Eq: assumption on potential - norm on potentials for dynamics} because of the estimate
\[
	\|\bar{\Phi}\|_{\varepsilon,\zeta}
	\leq e^{\zeta\|\Psi\|}\|\bar{\Phi}\|_{\varepsilon+\zeta\|\Psi\|}\,.
\]
In the general case, however, condition \eqref{Eq: assumption on potential - norm on potentials strengthened for dynamics} also covers situations in which $\|\Psi_x\|$ is unbounded and $\Psi$ does not commute with $\bar{\Phi}$, as the following example clearly illustrates.

\begin{example}
	Consider the two-dimensional quantum Ising model on $\Gamma=\mathbb{Z}^2$ with an unbounded transverse magnetic potential along the line $x=(0,x_2)$:
	\begin{equation}\label{eq: unbounded potential}
		\Phi_\Lambda :=
		\begin{dcases}
			t_{x,y} S^3_x S^3_y & \Lambda = \{x,y\}\in \Gamma_b,\\
			B|x_2|S^1_x & \Lambda = \{(0,x_2)\},\\
			0 & \text{otherwise}.
		\end{dcases}
	\end{equation}
	Here, $\Gamma_b$ denotes the set of pairs of points $x,y\in\mathbb{Z}^2$ with $\mathrm{d}(x,y)=1$, $\mathrm{d}$ is the graph distance, and $B\neq 0$ and $t_{x,y}$ are real parameters.
	Notice that $\Phi$ does not fulfil condition \eqref{Eq: assumption on potential - commutation}.
	Assume that there exists $a>0$ such that
	\begin{equation*}
		\sup_{(x,y)\in\Gamma_b}|t_{x,y}|<+\infty,
		\qquad
		\limsup_{|x_2|\to +\infty}\sup_{y\in\mathbb{Z}^2}|t_{(0,x_2),y}|e^{a|x_2|}<+\infty.
	\end{equation*}
	The second condition means that the interaction strength decays along the line on which the magnetic potential becomes large.
	A straightforward computation then shows that
	\begin{equation*}
		\|\bar{\Phi}\|_{\varepsilon,a/2|B|}<+\infty\,,
	\end{equation*}
	for every $\varepsilon>0$.
	Therefore, one can construct a dynamics $\tau^\Gamma$ implementing the potential in \eqref{eq: unbounded potential} as the strong limit of the finite-volume dynamics $\tau^\Lambda$.
\end{example}

To establish that the dynamics $\tau^\Gamma$ arises as the strong limit of the finite-volume dynamics $\tau^\Lambda$ as $\Lambda\uparrow\Gamma$, we invoke \cite[Prop. 6.2.3]{Bratteli_Robinson_97}.  
This result guarantees that the (closure of the) derivation $\delta_\Gamma$ generates a dynamics $\tau^\Gamma$ on $\mathcal{A}_\Gamma$, with $\tau^\Gamma_t = \operatorname{s-}\lim_{\Lambda\uparrow\Gamma}\tau^\Lambda_t$, where the convergence is uniform for $t$ in compact subsets of $\mathbb{R}$, provided that $\delta_\Gamma$ admits a dense set of analytic elements in $\mathcal{A}_\Gamma$.  
The following result establishes precisely this property, with a proof obtained by adapting the arguments of \cite[Thm. 6.2.4]{Bratteli_Robinson_97} to the present setting.

\begin{proposition}\label{Prop: dynamics on quasi-local algebra - existence with strengthened assumptions}
	If assumption \eqref{Eq: assumption on potential - norm on potentials strengthened for dynamics} holds, then $\dot{\mathcal{A}}_\Gamma$ is a dense $*$-subalgebra of analytic elements for the closure of $\delta_\Gamma$.
\end{proposition}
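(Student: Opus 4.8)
The plan is to establish the two properties that make up "a dense $*$-subalgebra of analytic elements" for the closed derivation $\overline{\delta_\Gamma}$. That $\dot{\mathcal{A}}_\Gamma=\bigcup_{\Lambda\Subset\Gamma}\mathcal{A}_\Lambda$ is a dense $*$-subalgebra of $\mathcal{A}_\Gamma$ is built into the construction of the direct limit, so the substance of the statement is that every local $A_\Lambda\in\mathcal{A}_\Lambda$ is an analytic element: one has to show $A_\Lambda\in\bigcap_{n\ge0}D(\overline{\delta_\Gamma}^{\,n})$ and $\sum_{n\ge0}\frac{t^n}{n!}\|\overline{\delta_\Gamma}^{\,n}(A_\Lambda)\|<\infty$ for some $t>0$. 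I will in fact prove the quantitative bound
\begin{align*}
	\|\overline{\delta_\Gamma}^{\,n}(A_\Lambda)\|
	\le(n+1)!\,C^{\,n}\,e^{\varepsilon|\Lambda|+\zeta\|\Psi\|_\Lambda}\,\|A_\Lambda\|\,,
	\qquad
	C:=\max\Bigl\{\tfrac{2}{\zeta},\ \tfrac{2\|\bar\Phi\|_{\varepsilon,\zeta}}{\varepsilon}\Bigr\}\,,
\end{align*}
with $\varepsilon,\zeta$ the parameters of \eqref{Eq: assumption on potential - norm on potentials strengthened}; since $\sum_{n\ge0}(n+1)(Ct)^n=(1-Ct)^{-2}<\infty$ for $t<1/C$, this yields analyticity, indeed with a \emph{common} radius $1/C$ for all of $\dot{\mathcal{A}}_\Gamma$.

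The core step is a norm estimate on the $n$-fold multiple-commutator chain-sum
\begin{align*}
	W_n(A_\Lambda):=i^n\sum_{X_1,\dots,X_n:\Lambda}\operatorname{ad}_{\Phi_{X_n}}\cdots\operatorname{ad}_{\Phi_{X_1}}(A_\Lambda)\,,
\end{align*}
where $\sum_{X_1,\dots,X_n:\Lambda}$ is the constrained sum introduced for the Dyson series \eqref{Eq: dynamics on quasi-local algebra - Dyson series}, and $\Phi_X$ equals $\bar\Phi_X$ for $|X|\ge2$ and $\Psi_x$ for $X=\{x\}$. The idea is to separate, in each tuple $(X_1,\dots,X_n)$, the single-site steps ($|X_j|=1$, contributing $\Psi_{x_j}$ with $x_j\in S_{j-1}^\Lambda$) from the multi-site steps ($|X_j|\ge2$, contributing $\bar\Phi_{X_j}$). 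A single-site step leaves the running support unchanged, $S_j^\Lambda=S_{j-1}^\Lambda$; hence, if the multi-site steps carve out $k$ chosen sets $X^{(1)},\dots,X^{(k)}$, then throughout the $\ell$-th block of single-site steps the running support is the fixed set $R_\ell:=\Lambda\cup X^{(1)}\cup\cdots\cup X^{(\ell)}$, so a block of length $m_\ell$ contributes (after summing the site labels over $R_\ell$) the factor $(2\|\Psi\|_{R_\ell})^{m_\ell}$. Since $\sum_\ell m_\ell=n-k$, there are $\binom nk$ ways to interleave, and $\|\Psi\|_{R_\ell}\le\|\Psi\|_\Lambda+\sum_{i\le k}\|\Psi\|_{X^{(i)}}$, this exact reorganization followed by $\|\operatorname{ad}_A(B)\|\le2\|A\|\|B\|$ yields
\begin{align*}
	\|W_n(A_\Lambda)\|
	\le\|A_\Lambda\|\sum_{k=0}^n\binom nk\sum_{X^{(1)},\dots,X^{(k)}:\Lambda}\Bigl(\prod_{\ell=1}^k2\|\bar\Phi_{X^{(\ell)}}\|\Bigr)\Bigl(2\|\Psi\|_\Lambda+2\sum_{i=1}^k\|\Psi\|_{X^{(i)}}\Bigr)^{n-k}\,.
\end{align*}
Applying the elementary inequality $x^{\,n-k}\le(n-k)!\,(2/\zeta)^{\,n-k}e^{(\zeta/2)x}$ with $x=2\|\Psi\|_\Lambda+2\sum_i\|\Psi\|_{X^{(i)}}$ turns the last factor into precisely the weights $e^{\zeta\|\Psi\|_\Lambda}$ and $e^{\zeta\|\Psi\|_{X^{(i)}}}$ occurring in $\|\bar\Phi\|_{\varepsilon,\zeta}$; then Lemma \ref{Lem: sum on subsets estimate}, applied to the remaining chain-sum with $\alpha_X:=2\,e^{\zeta\|\Psi\|_X}\|\bar\Phi_X\|$ (which vanishes on singletons, so the restriction $|X^{(\ell)}|\ge2$ is automatic), bounds it by $k!\,\varepsilon^{-k}e^{\varepsilon|\Lambda|}(2\|\bar\Phi\|_{\varepsilon,\zeta})^k$. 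Since $\binom nk(n-k)!\,k!=n!$, this gives $\|W_n(A_\Lambda)\|\le n!\,e^{\varepsilon|\Lambda|+\zeta\|\Psi\|_\Lambda}\|A_\Lambda\|\sum_{k=0}^n(2\|\bar\Phi\|_{\varepsilon,\zeta}/\varepsilon)^k(2/\zeta)^{\,n-k}\le(n+1)!\,C^{\,n}e^{\varepsilon|\Lambda|+\zeta\|\Psi\|_\Lambda}\|A_\Lambda\|$; in particular $W_n(A_\Lambda)$ converges absolutely.

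It remains to identify $W_n(A_\Lambda)$ with $\overline{\delta_\Gamma}^{\,n}(A_\Lambda)$, by a routine induction on $n$ along the lines of the proof of \cite[Thm.~6.2.4]{Bratteli_Robinson_97}. Assuming $\overline{\delta_\Gamma}^{\,n-1}(A_\Lambda)=W_{n-1}(A_\Lambda)$, truncate each chain-sum to $X_i\subseteq\Lambda_N$ for an exhaustion $\Lambda_N\uparrow\Gamma$: the truncations are \emph{local} elements, hence lie in $\dot{\mathcal{A}}_\Gamma=D(\delta_\Gamma)$, they converge in norm to $W_{n-1}(A_\Lambda)$, and their images under $\delta_\Gamma$ converge in norm to $W_n(A_\Lambda)$ (both by the absolute convergence of the series $W_{n-1}(A_\Lambda)$ and $W_n(A_\Lambda)$ established above); closedness of $\overline{\delta_\Gamma}$ then gives $W_{n-1}(A_\Lambda)\in D(\overline{\delta_\Gamma})$ with $\overline{\delta_\Gamma}(W_{n-1}(A_\Lambda))=W_n(A_\Lambda)$, i.e. $A_\Lambda\in D(\overline{\delta_\Gamma}^{\,n})$ and $\overline{\delta_\Gamma}^{\,n}(A_\Lambda)=W_n(A_\Lambda)$. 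I expect the only genuinely delicate point to be the block decomposition in the norm estimate: since no uniform bound on $\|\Psi_x\|$ is assumed, the naive estimate $\|W_n(A_\Lambda)\|\le2^n\|A_\Lambda\|\sum_{X_1,\dots,X_n:\Lambda}\prod_j\|\Phi_{X_j}\|$ is useless, as its single-site weight $\sup_x\|\Psi_x\|$ may be infinite; isolating the single-site steps — which do not enlarge the support — is exactly what lets the $e^{\zeta\|\Psi\|_\Lambda}$-weight of the strengthened hypothesis \eqref{Eq: assumption on potential - norm on potentials strengthened} absorb the growth of $\|\Psi\|$ along the chain, playing the role previously played by the commutation assumption \eqref{Eq: assumption on potential - commutation}. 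The closedness induction and the final summation are standard.
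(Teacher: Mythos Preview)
Your proposal is correct and follows essentially the same strategy as the paper: the paper writes $\delta_\Gamma=\delta_\Psi+\bar{\delta}_\Gamma$ and expands $\delta_\Gamma^n(A_\Lambda)$ as $\sum_k\sum_{m_0+\cdots+m_k=n-k}\delta_\Psi^{m_0}\bar{\delta}_\Gamma\cdots\bar{\delta}_\Gamma\delta_\Psi^{m_k}(A_\Lambda)$, which is exactly your block decomposition into single-site and multi-site steps, and then applies the same two ingredients (the elementary bound $x^{n-k}\le(n-k)!\,\zeta^{-(n-k)}e^{\zeta x}$ and Lemma~\ref{Lem: sum on subsets estimate}) to obtain the identical estimate $n!\,2^n\zeta^{-n}e^{\varepsilon|\Lambda|+\zeta\|\Psi\|_\Lambda}\sum_{k\le n}(\zeta\|\bar\Phi\|_{\varepsilon,\zeta}/\varepsilon)^k$, yielding the same radius $\min\{\zeta/2,\varepsilon/(2\|\bar\Phi\|_{\varepsilon,\zeta})\}=1/C$. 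The one point you make more explicit than the paper is the identification $\overline{\delta_\Gamma}^{\,n}(A_\Lambda)=W_n(A_\Lambda)$ via closedness and a truncation argument; the paper simply writes $\delta_\Gamma^n(A_\Lambda)$ and leaves this routine step to the reader (relying on the argument of \cite[Thm.~6.2.4]{Bratteli_Robinson_97}).
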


\begin{proof}
	To begin with, we observe that $\delta_\Gamma=\delta_\Psi+\bar{\delta}_\Gamma$, where $\delta_\Psi(A_\Lambda):=i[\Psi_\Lambda,A_\Lambda]$ while $\bar{\delta}_\Gamma$ is defined as in \eqref{Eq: dynamics on quasi-local algebra - infinitesimal generator} with $\Phi$ replaced by $\bar{\Phi}$.
	Well-definiteness of $\bar{\delta}_\Gamma$ follows from assumption \eqref{Eq: assumption on potential - norm on potentials strengthened}, in particular the series defining $\bar{\delta}_\Gamma(A_\Lambda)$ converges on account of the estimate
	\begin{align*}
		\|\bar{\delta}_\Gamma(A_\Lambda)\|
		\leq 2\|A_\Lambda\||\Lambda
		|\sup_{x\in\Gamma}\sum_{\substack{X\Subset\Gamma\\x\in X}}\|\Phi_X\|
		\leq 2\|A_\Lambda\||\Lambda|
		\|\bar{\Phi}\|_{\varepsilon,\zeta}\,.
	\end{align*}
	We will now prove that, for all $A_\Lambda\in\mathcal{A}_\Lambda$, the series
	\begin{align*}
		\sum_{n\geq 0}\frac{z^n}{n!}\|\delta_\Gamma^n(A_\Lambda)\|\,,
	\end{align*}
	has a non-vanishing radius of convergence.
	This will be proved by deriving a suitable estimate for $\|\delta_\Gamma^n(A_\Lambda)\|$.
	To begin with we observe that
	\begin{align*}
		\delta_\Gamma^n(A_\Lambda)
		=\sum_{k=0}^n\sum_{\substack{m_0,\ldots,m_k\in\mathbb{Z}_+\\m_0+\ldots+m_k=n-k}}
		\delta_\Psi^{m_0}\bar{\delta}_\Gamma
		\delta_\Psi^{m_1}
		\cdots\bar{\delta}_\Gamma
		\delta_\Psi^{m_k}(A_\Lambda)\,.
	\end{align*}
	Out of the explicit definition of $\delta_\Psi$ and $\bar{\delta}_\Gamma$ we have
	\begin{multline*}
		\|\delta_\Gamma^n(A_\Lambda)\|
		\leq\sum_{k=0}^n\sum_{\substack{m_0,\ldots,m_k\in\mathbb{Z}_+\\m_0+\ldots+m_k=n-k}}
		\\
		\sum_{X_1,\ldots,X_k:\Lambda}
		\|(\operatorname{ad}_{\Psi_{S_k^\Lambda}})^{m_0}
		\operatorname{ad}_{\bar{\Phi}_{X_k}}
		(\operatorname{ad}_{\Psi_{S_{k-1}^\Lambda}})^{m_1}
		\cdots\operatorname{ad}_{\bar{\Phi}_{X_1}}
		(\operatorname{ad}_{\Psi_{S^\Lambda_0}})^{m_k}(A_\Lambda)\|\,.
	\end{multline*}
	Since $\|\Psi_{S_j^\Lambda}\|\leq\|\Psi\|_{S_k^\Lambda}$ for all $j\in\{1,\ldots,k\}$ we have
	\begin{align*}
		\|(\operatorname{ad}_{\Psi_{S_k^\Lambda}})^{m_0}
		\operatorname{ad}_{\bar{\Phi}_{X_k}}
		(\operatorname{ad}_{\Psi_{S_{k-1}^\Lambda}})^{m_1}
		\cdots\operatorname{ad}_{\bar{\Phi}_{X_1}}
		(\operatorname{ad}_{\Psi_{S^\Lambda_0}})^{m_k}(A_\Lambda)\|
		\leq\|A_\Lambda\|
		2^n\|\Psi\|_{S_k^\Lambda}^{n-k}
		\prod_{j=1}^k\|\bar{\Phi}_{X_j}\|\,.
	\end{align*}
	Furthermore, for all $\zeta>0$ it holds
    \begin{align*}
		\|\Psi\|_{S^\Lambda_k}^{n-k}
		\leq(\|\Psi\|_\Lambda
		+\|\Psi\|_{X_1}
		+\ldots
		+\|\Psi\|_{X_k})^{n-k}
		\leq (n-k)!\zeta^{-(n-k)}
		e^{\zeta\|\Psi\|_\Lambda}\prod_{j=1}^ke^{\zeta\|\Psi\|_{X_j}}\,,
	\end{align*}
	which implies
	\begin{align*}
		\|\delta_\Gamma^n(A_\Lambda)\|
		\leq\|A_\Lambda\|e^{\zeta\|\Psi\|_\Lambda}
		n!2^n\zeta^{-n}
		\sum_{k=0}^n
		\frac{\zeta^k}{k!}
		\sum_{X_1,\ldots,X_k:\Lambda}
		\prod_{j=1}^k
		e^{\zeta\|\Psi\|_{X_j}}
		\|\bar{\Phi}_{X_j}\|\,,
	\end{align*}
	where the sum over $m_0,\ldots,m_k$ has provided a factor ${n\choose k}$.
	An application of Lemma \ref{Lem: sum on subsets estimate} leads to
	\begin{align*}
		\|\delta_\Gamma^n(A_\Lambda)\|
		&\leq
		\|A_\Lambda\|e^{\zeta\|\Psi\|_\Lambda}
		e^{\varepsilon|\Lambda|}
		n!2^n\zeta^{-n}
		\sum_{k=0}^n
		\Big(\frac{\zeta}{\varepsilon}
		\|\bar{\Phi}\|_{\varepsilon,\zeta}\Big)^k
		\\
		&=
		\|A_\Lambda\|e^{\zeta\|\Psi\|_\Lambda}
		e^{\varepsilon|\Lambda|}
		n!2^n\zeta^{-n}
		\frac{1-\Big(\frac{\zeta}{\varepsilon}\|\bar{\Phi}\|_{\varepsilon,\zeta}\Big)^{n+1}}
		{1-\frac{\zeta}{\varepsilon}\|\bar{\Phi}\|_{\varepsilon,\zeta}}\,.
	\end{align*}
	The latter estimate implies
	\begin{align*}
		\sum_{n\geq 0}\frac{|z|^n}{n!}\|\delta_\Gamma^n(A_\Lambda)\|
		\leq
		\frac{\|A_\Lambda\|e^{\zeta\|\Psi\|_\Lambda}e^{\varepsilon|\Lambda|}}{1-\frac{\zeta}{\varepsilon}\|\bar{\Phi}\|_{\varepsilon,\zeta}}
		\sum_{n\geq 0}\Big(\frac{2|z|}{\zeta}\Big)^n
		\Big[1-\Big(\frac{\zeta}{\varepsilon}\|\bar{\Phi}\|_{\varepsilon,\zeta}\Big)^{n+1}\Big]
		<+\infty\,,
	\end{align*}
	for $|z|<\min\{\zeta/2,\varepsilon/2\|\bar{\Phi}\|_{\varepsilon,\zeta}\}$.
\end{proof}

\begin{remark}\label{Rmk: Dyson series - analytic continuation}
	\noindent
	\begin{enumerate}[(i)]
		\item
		The existence of the thermodynamic limit 
		$\tau^\Gamma_t=\operatorname{s-}\lim_{\Lambda\uparrow\Gamma}\tau^\Lambda_t$ 
		can be established by means of Lieb-Robinson bounds \cite{Hastings_Koma_2006,Lieb_Robinson_1972,Nachtergaele_Ogata_Sims_2006,Nachtergaele_Raz_Schlein_Sims_2009,Nachtergaele_Sims_2006,Nachtergaele_Sims_2014,Nachtergaele_Schlein_Sims_Starr_Zagrebnov_2010}.
		In that framework, the assumptions on $\|\bar{\Phi}_X\|$ are more directly related to the diameter of $X$ rather than its cardinality, 
		leading to conditions that are more suitable for physical applications. 
		However, to the best of our knowledge, the assumptions employed in the derivation of Lieb-Robinson bounds do not suffice to prove the uniqueness of KMS quantum states at sufficiently high temperatures. 
		Since this is ultimately our goal, we therefore prefer to work under assumption \eqref{Eq: assumption on potential - norm on potentials strengthened for dynamics}.

		\item 
		Assumption \eqref{Eq: assumption on potential - norm on potentials strengthened for dynamics} implies that the Dyson series \eqref{Eq: dynamics on quasi-local algebra - Dyson series} can be extended to the complex plane and specifically in the complex imaginary direction $t=i\sigma$ for small enough $\sigma>0$.
		Indeed, since $\|\tau^\Psi_{i\sigma}(A_\Lambda)\|\leq e^{2\sigma\|\Psi\|_\Lambda}\|A_\Lambda\|$, we have
		\begin{align*}
			\|\tau^\Gamma_{i\sigma}(A_\Lambda)\|
			&\leq\sum_{n\geq 0}
			\int_0^\sigma\mathrm{d}s_1
			\cdots
			\int_0^{s_{n-1}}\mathrm{d}s_n
			\sum_{X_1,\ldots,X_n:\Lambda}
			\|\operatorname{ad}_{\tau^\Psi_{is_1}\bar{\Phi}_{X_1}}
			\cdots
			\operatorname{ad}_{\tau^\Psi_{is_n}\bar{\Phi}_{X_n}}
			\tau^\Psi_{i\sigma}(A_\Lambda)\|
			\\
			&\leq\|A_\Lambda\|e^{2\sigma\|\Psi\|_\Lambda}
			\sum_{n\geq 0}
			\frac{(2\sigma)^n}{n!}
			\sum_{X_1,\ldots,X_n:\Lambda}
			\prod_{j=1}^ne^{2\sigma\|\Psi\|_{X_j}}\|\bar{\Phi}_{X_j}\|
			\\
			&\leq\|A_\Lambda\|e^{2\sigma\|\Psi\|_\Lambda}
			e^{\varepsilon|\Lambda|}
			\sum_{n\geq 0}
			\Big(\frac{2\sigma}{\varepsilon}\|\bar{\Phi}\|_{\varepsilon,2\sigma}\Big)^n\,,
		\end{align*}
		which converges for for any $2\sigma<\zeta$ such that $2\sigma\|\bar{\Phi}\|_{\varepsilon,2\sigma}<\varepsilon$.
	\end{enumerate}
\end{remark}

In what follows, we will primarily focus on KMS quantum states on $\mathcal{A}_\Gamma$ associated with the dynamics $\tau^\Gamma$.  
For a fixed inverse temperature $\beta>0$, we refer to these as $(\beta,\tau^\Gamma)$-KMS quantum states.  
The existence of such states follows from a standard weak$^*$-compactness argument: any weak$^*$-limit point of $\tau^\Lambda$-KMS states on $\mathcal{A}_\Lambda$ yields a $\tau^\Gamma$-KMS state on $\mathcal{A}_\Gamma$, \textit{cf.} \cite[Prop. 6.2.15]{Bratteli_Robinson_97}. 
Our goal is to prove Theorem \ref{Thm: uniqueness for KMS quantum states} which implies uniqueness of $(\beta,\tau^\Gamma)$-KMS quantum states for small enough $\beta$, \textit{cf.} condition \eqref{Eq: uniqueness for KMS quantum states - condition for uniqueness}.

\begin{remark}\label{Rmk: subcritical inverse temperature}
    Notice that the function $\beta \mapsto \beta \|\bar{\Phi}\|_{\varepsilon+\log 3,2\beta}$ is strictly increasing and diverges as $\beta \to \infty$. 
    Therefore, if condition \eqref{Eq: uniqueness for KMS quantum states - condition for uniqueness} holds for some $\beta_0$, then it is also satisfied for all $\beta < \beta_0$.
    From this point of view, Theorem \ref{Thm: uniqueness for KMS quantum states} identifies an "optimal" subcritical inverse temperature $\beta_{\textsc{u}}$, defined as the unique value such that
    \begin{align}\label{Eq: estimate of subcritical inverse temperature}
        \beta_{\textsc{u}} \|\bar{\Phi}\|_{\varepsilon+\log 3,2\beta_{\textsc{u}}}
        = \frac{1}{6} \frac{\varepsilon}{1+e^\varepsilon}\,.
    \end{align}
    Indeed, uniqueness of $(\beta,\tau^\Gamma)$-KMS states is guaranteed for all $\beta < \beta_{\textsc{u}}$.
\end{remark}

\subsection{Decomposition in $\eta$-free elements}
\label{Subsec: decomposition in eta-free elements}

The proof of Theorem \ref{Thm: uniqueness for KMS quantum states} makes use of the following technical construction, which has been inspired by \cite{{Drago_Van_de_Ven_2024}} and it is interesting in its own right.
For each $x\in\Gamma$, let $\eta_x\colon\mathcal{A}_x\to\mathbb{C}$, be a state on $\mathcal{A}_x$.
We will denote by $\eta_X:=\bigotimes_{x\in X}\eta_x\colon\mathcal{A}_X\to\mathbb{C}$.
For a smaller set $X\subseteq\Lambda$, we extend $\eta_X$ to $\mathcal{A}_\Lambda$ by considering $I_{\Lambda\setminus X} \otimes \eta_X\colon\mathcal{A}_\Lambda\to\mathcal{A}_{\Lambda\setminus X}$.
For simplicity, we reuse the symbol $\eta_X$ for this extended map.
We then define
\begin{align}\label{Eq: eta-free elements - definition}
	\tilde{\mathcal{A}}_\Lambda
	:=\{\tilde{A}_\Lambda\in\mathcal{A}_\Lambda\mid
	\eta_x(\tilde{A}_\Lambda)=0\;\forall x\in \Lambda\}\,,
\end{align}
while, by definition, $\tilde{\mathcal{A}}_\varnothing$ consists of multiples of the unit element of $\mathcal{A}_\Gamma$.
Elements in $\mathcal{A}_\Lambda$ will be called \textbf{$\eta$-free within $\Lambda$}.

\begin{remark}\label{Rmk: decomposition in eta-free elements - generic case}
	The results of this section, \textit{cf.} Proposition \ref{Prop: decomposition in eta-free elements}, can be generalized to the case where $\eta_x\colon\mathcal{A}_x\to\mathcal{B}_x$ is a conditional expectation on a given $C^*$-subalgebra $\mathcal{B}_x$ of $\mathcal{A}_x$.
    We recall that the latter is a linear, positive and normalized map such that $\eta_x(B_1AB_2)=B_1\eta_x(A)B_2$ for all $A\in\mathcal{A}_x$ and $B_1,B_2\in\mathcal{B}_x$.
    Within this setting, the proper generalization of $\tilde{\mathcal{A}}_X$ is as follows:
    \begin{align*}
    	\mathcal{A}_\Lambda^X
    	:=\{A_\Lambda^X\in\mathcal{A}_\Lambda\mid
    	\eta_x(A_\Lambda^X)=0\;\forall x\in X\,,\,
    	\eta_x(A_\Lambda^X)=A_\Lambda^X\;\forall x\notin X\}\,,
    \end{align*}
    while, by definition, $\mathcal{A}_\varnothing^\varnothing$ consists of multiples of the unit element of $\mathcal{A}_\Gamma$.
    Mind the different meanings of $X$ and $\Lambda$ in $\mathcal{A}_\Lambda^X$: $\Lambda$ denotes the localization of $\mathcal{A}_\Lambda^X$, 
    while $X$ refers to the region where $\mathcal{A}_\Lambda^X$ is $\eta$-free.
    The second condition in the definition of $\mathcal{A}_\Lambda^X$ is necessary to ensure that $\mathcal{A}_\Lambda^X \cap \mathcal{A}_\Lambda^Y = \varnothing$ for $X\neq Y$.
\end{remark}

Crucially, any element $A_\Lambda \in \mathcal{A}_\Lambda$ can be decomposed as a finite sum of elements in $\tilde{\mathcal{A}}_X$ for various $X \subseteq \Lambda$.

\begin{proposition}\label{Prop: decomposition in eta-free elements}
	Let $\Lambda\Subset\Gamma$ and $A_\Lambda\in \mathcal{A}_\Lambda$.
	Then there exists a unique sequence $(\tilde{A}_X)_{X\subseteq\Lambda}$ with $\tilde{A}_X\in\tilde{\mathcal{A}}_X$ such that
	\begin{align}\label{Eq: decomposition in eta-free elements}
		A_\Lambda
		=\sum_{X\subseteq\Lambda}\tilde{A}_X\,.
	\end{align}
	The elements $\tilde{A}_X\in\tilde{\mathcal{A}}_X$ are recursively defined by
	\begin{align}\label{Eq: decomposition in eta-free elements - recursive definition}
		\tilde{A}_\varnothing
		:=\eta_\Lambda(A_\Lambda)\,,
		\qquad
		\tilde{A}_X
		:=\eta_{\Lambda\setminus X}(A_\Lambda)
		-\sum_{Y\subset X}\tilde{A}_Y
		\qquad
		\forall X\subseteq\Lambda\,,
	\end{align}
	where we set $\eta_\varnothing(A_\Lambda):=A_\Lambda$.
	This recursive relation can be inverted and in particular we have
	\begin{align}\label{Eq: decomposition in eta-free elements - explicit form}
		\tilde{A}_X
		=\sum_{Y\subseteq X}(-1)^{|X\setminus Y|}\eta_{\Lambda\setminus Y}(A_\Lambda)\,.
	\end{align}
\end{proposition}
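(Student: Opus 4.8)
The plan is to take the explicit formula \eqref{Eq: decomposition in eta-free elements - explicit form} as the \emph{definition} of the candidate elements $\tilde{A}_X$, and then verify in turn that (a) $\tilde{A}_X\in\tilde{\mathcal{A}}_X$, (b) the decomposition \eqref{Eq: decomposition in eta-free elements} holds, (c) the sequence is unique, and (d) the recursion \eqref{Eq: decomposition in eta-free elements - recursive definition} holds and indeed inverts to \eqref{Eq: decomposition in eta-free elements - explicit form}. The whole argument is an instance of Möbius inversion over the Boolean lattice of subsets of $\Lambda$; the only non-combinatorial input is two elementary structural properties of the maps $\eta_X\colon\mathcal{A}_\Lambda\to\mathcal{A}_{\Lambda\setminus X}$, which I would record at the outset: first, they commute and satisfy $\eta_X\circ\eta_Y=\eta_{X\cup Y}$, with the convention that applying $\eta_x$ to an element not involving site $x$ leaves it unchanged; second, $\eta_{\Lambda\setminus Y}(A_\Lambda)\in\mathcal{A}_Y$. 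Both follow immediately from the definition of $\eta_X$ as the site-wise application of the states $\eta_x$ together with the tensor-product structure of $\mathcal{A}_\Lambda$.

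From the second property, the localisation claim $\tilde{A}_X\in\mathcal{A}_X$ is immediate: each summand $\eta_{\Lambda\setminus Y}(A_\Lambda)$ with $Y\subseteq X$ lies in $\mathcal{A}_Y\subseteq\mathcal{A}_X$. For $\eta$-freeness, fix $x\in X$ and apply $\eta_x$ to \eqref{Eq: decomposition in eta-free elements - explicit form}; splitting the sum over $Y\subseteq X$ according to whether $x\in Y$, using $\eta_x\circ\eta_{\Lambda\setminus Y}=\eta_{\Lambda\setminus Y}$ when $x\notin Y$ and $\eta_x\circ\eta_{\Lambda\setminus Y}=\eta_{\Lambda\setminus(Y\setminus\{x\})}$ when $x\in Y$, the substitution $Y\mapsto Y\setminus\{x\}$ pairs the two families of terms with opposite signs, so $\eta_x(\tilde{A}_X)=0$ and hence $\tilde{A}_X\in\tilde{\mathcal{A}}_X$. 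For the identity \eqref{Eq: decomposition in eta-free elements}, I would exchange the order of summation,
\[
\sum_{X\subseteq\Lambda}\tilde{A}_X
=\sum_{Y\subseteq\Lambda}\Big(\sum_{Y\subseteq X\subseteq\Lambda}(-1)^{|X\setminus Y|}\Big)\eta_{\Lambda\setminus Y}(A_\Lambda),
\]
and invoke the standard alternating-sum identity $\sum_{Z\subseteq\Lambda\setminus Y}(-1)^{|Z|}=\delta_{Y,\Lambda}$, which kills every term except $\eta_\varnothing(A_\Lambda)=A_\Lambda$.

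For uniqueness I would prove the stronger statement that the subspaces $\tilde{\mathcal{A}}_X$, $X\subseteq\Lambda$, are linearly independent in $\mathcal{A}_\Lambda$: if $\sum_{X\subseteq\Lambda}\tilde{A}_X=0$ with $\tilde{A}_X\in\tilde{\mathcal{A}}_X$, then each $\tilde{A}_X=0$. Applying $\eta_{\Lambda\setminus Z}$ and noting that $\eta_{\Lambda\setminus Z}(\tilde{A}_X)=\tilde{A}_X$ when $X\subseteq Z$ (as $X\cap(\Lambda\setminus Z)=\varnothing$) while $\eta_{\Lambda\setminus Z}(\tilde{A}_X)=0$ otherwise (since then some $x\in X\setminus Z$ yields, by commutativity of the $\eta$'s, the factor $\eta_x(\tilde{A}_X)=0$), one obtains $\sum_{X\subseteq Z}\tilde{A}_X=0$ for all $Z\subseteq\Lambda$; an induction on $|Z|$ then forces $\tilde{A}_Z=0$. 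Finally, applying the same case distinction with $Z=X$ to the already-established identity \eqref{Eq: decomposition in eta-free elements} gives $\eta_{\Lambda\setminus X}(A_\Lambda)=\sum_{Y\subseteq X}\tilde{A}_Y$, which is precisely the recursion \eqref{Eq: decomposition in eta-free elements - recursive definition} once solved for $\tilde{A}_X$; conversely, Möbius inversion on the Boolean lattice turns this recursion back into \eqref{Eq: decomposition in eta-free elements - explicit form}, so the two descriptions agree.

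The main obstacle is purely bookkeeping: one must be scrupulous about the meaning of $\eta_X$ as a map $\mathcal{A}_\Lambda\to\mathcal{A}_{\Lambda\setminus X}$ — in particular about re-applying $\eta_x$ to an element already localised away from $x$ — and about the sign accounting in the telescoping and pairing steps. Once the two structural properties of the $\eta_X$ are cleanly isolated, each of (a)--(d) collapses to a one-line computation, with no genuine analytic or algebraic difficulty.
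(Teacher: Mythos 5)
Your proposal is correct, and it is a genuine reorganization of the paper's argument rather than a reproduction of it. The paper proceeds in the opposite direction: it takes the recursion \eqref{Eq: decomposition in eta-free elements - recursive definition} as the definition, obtains \eqref{Eq: decomposition in eta-free elements} trivially by reading the recursion at $X=\Lambda$, proves $\tilde{A}_X\in\tilde{\mathcal{A}}_X$ by induction on $|X|$, and then derives the closed form \eqref{Eq: decomposition in eta-free elements - explicit form} by a second induction (using the same alternating-sum identity $\sum_{S\subseteq X}(-1)^{|S|}=0$ for $X\neq\varnothing$ that you invoke). You instead take the closed form as the definition and verify everything by direct, non-inductive computations: the pairing argument for $\eta$-freeness and the Fubini-plus-Möbius step for the decomposition replace the paper's two inductions, which is arguably cleaner. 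The most substantive difference is uniqueness: the paper asserts it in the statement but its proof only establishes existence (uniqueness is left implicit, presumably via the recursion), whereas you prove it explicitly by showing that $\eta_{\Lambda\setminus Z}$ applied to a vanishing sum $\sum_{X\subseteq\Lambda}\tilde{A}_X=0$ of $\eta$-free elements yields $\sum_{X\subseteq Z}\tilde{A}_X=0$ for every $Z$, and then inducting on $|Z|$. That evaluation argument is the correct way to close the gap, and it relies only on the two structural facts you isolate at the outset ($\eta_X\circ\eta_Y=\eta_{X\cup Y}$ and $\eta_{\Lambda\setminus Y}(A_\Lambda)\in\mathcal{A}_Y$), both of which do follow from the tensor-product definition of $\eta_X$ and the normalization $\eta_x(1_x)=1$. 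The trade-off is that the paper's route makes the recursion the primary object (which is how the decomposition is actually used later, \textit{cf.} Remark \ref{Rmk: decomposition in eta-free elements - improved}), while yours makes the explicit formula primary and recovers the recursion as a corollary of the decomposition itself; both yield the same estimates downstream.
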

\begin{proof}
	Equation \eqref{Eq: decomposition in eta-free elements} follows from equation \eqref{Eq: decomposition in eta-free elements - recursive definition} for $X=\Lambda$.
	We will now prove that $\tilde{A}_X$ belongs to $\tilde{\mathcal{A}}_X$ for all $X\subseteq\Lambda$.
	This can be proved by induction over $|X|$.
	If $|X|=0$ ---that is, if $X=\varnothing$--- we find $\tilde{A}_\varnothing:=\eta_\Lambda(A_\Lambda)\in\tilde{\mathcal{A}}_\varnothing$.
	Let now assume that $\tilde{A}_X\in\tilde{\mathcal{A}}_X$ for all $X\subseteq\Lambda$ with $|X|\leq N-1$.
	Let $X\subseteq\Lambda$ with $|X|=N$.
	We will prove that $\tilde{A}_X\in\tilde{\mathcal{A}}_X$.
	Let $x\in X$: Then equation \eqref{Eq: decomposition in eta-free elements - recursive definition} implies
	\begin{align*}
		\eta_x(\tilde{A}_X)
		=\eta_{\Lambda\setminus(X\setminus\{x\})}(A_\Lambda)
		-\sum_{Y\subset X}\eta_x(\tilde{A}_Y)\,.
	\end{align*}
	Since $|Y|<N$ we have $\tilde{A}_Y\in\tilde{\mathcal{A}}_Y$.
	This implies $\eta_x(\tilde{A}_Y)=0$ for all $x\in Y$.
	It follows that
	\begin{align*}
		\eta_x(\tilde{A}_X)
		=\eta_{\Lambda\setminus(X\setminus\{x\})}(A_\Lambda)
		-\sum_{Y\subseteq X\setminus\{x\}}\tilde{A}_Y
		=\eta_{\Lambda\setminus(X\setminus\{x\})}(A_\Lambda)
		-\sum_{Y\subset X\setminus\{x\}}\tilde{A}_Y
		-\tilde{A}_{X\setminus\{x\}}
		=0\,,
	\end{align*}
	where we again used equation \eqref{Eq: decomposition in eta-free elements - recursive definition} for $\tilde{A}_{X\setminus\{x\}}$.

    It remains to prove equation \eqref{Eq: decomposition in eta-free elements - explicit form}.
	We proceed again by induction on $|X|$.
	For $X=\varnothing$ we have $\tilde{A}_\varnothing=\eta_\Lambda(A_\Lambda)$, which is consistent with equation \eqref{Eq: decomposition in eta-free elements - explicit form}.
	By induction, let assume that equation \eqref{Eq: decomposition in eta-free elements - explicit form} has been proved for all $X\subseteq\Lambda$ with $|X|=N-1$.
	Let $X\subseteq\Lambda$ with $|X|=N$.
	We will prove equation \eqref{Eq: decomposition in eta-free elements - explicit form} for such $X$.
	We have
	\begin{align*}
		\tilde{A}_X
		&=\eta_{\Lambda\setminus X}(A_\Lambda)
		-\sum_{Y\subset X}\tilde{A}_Y
		&\textrm{eq. }\eqref{Eq: decomposition in eta-free elements - recursive definition}
		\\
		&=\eta_{\Lambda\setminus X}(A_\Lambda)
		-\sum_{Y\subset X}
		\sum_{Z\subseteq Y}(-1)^{|Y\setminus Z|}\eta_{\Lambda\setminus Z}(A_\Lambda)
		&|Y|<N
		\\
		&=\eta_{\Lambda\setminus X}(A_\Lambda)
		-\sum_{Z\subset X}
		\Big(\sum_{Z\subseteq Y\subset X}(-1)^{|Y|}\Big)(-1)^{|Z|}\eta_{\Lambda\setminus Z}(A_\Lambda)\,.
	\end{align*}
    Using the standard combinatorial identity
    \begin{align*}
        \sum_{S \subseteq X} (-1)^{|S|} = 0
        \qquad
        \textrm{for }
        X\neq\varnothing\,,
    \end{align*}
    we find
	\begin{align*}
		\sum_{Z\subseteq Y\subset X}(-1)^{|Y|}
		&=\sum_{S\subset X\setminus Z}(-1)^{|S|+|Z|}
		&Y=S\cup Z\,,\,S\subset X\setminus Z
		\\
		&=(-1)^{|Z|}\Big(\cancel{\sum_{S\subseteq X\setminus Z}(-1)^{|S|}}
		-(-1)^{|X\setminus Z|}\Big)
		&X\setminus Z\neq\varnothing
		\\
		&=(-1)^{|X|+1}\,.
	\end{align*}
	Overall we have
	\begin{align*}
		\tilde{A}_X
		=\eta_{\Lambda\setminus X}(A_\Lambda)
		+\sum_{Z\subset X}
		(-1)^{|X\setminus Z|}\eta_{\Lambda\setminus Z}(A_\Lambda)
		=\sum_{Z\subseteq X}
		(-1)^{|X\setminus Z|}\eta_{\Lambda\setminus Z}(A_\Lambda)\,.
	\end{align*}
	By induction, the proof is complete.
\end{proof}

\begin{remark}\label{Rmk: decomposition in eta-free elements - improved}
	\noindent
	\begin{enumerate}[(i)]
		\item
		Proposition \ref{Prop: decomposition in eta-free elements} provides a useful estimate on the norm of $\tilde{A}_X$ in terms of $A_\Lambda$.
		In particular we have
		\begin{align}\label{Eq: traceless elements - norm bound - classical case}
			\|\tilde{A}_X\|
			\leq 2^{|X|}\|A_\Lambda\|\,.
		\end{align}
		
		\item
		For later convenience we comment on a refinement of Proposition \ref{Prop: decomposition in eta-free elements}.
		To this avail, let $n\in\mathbb{N}$ and $X_1,\ldots,X_n,\Lambda\Subset\Gamma$, $Z\subset\Lambda$.
		Let $A_1,\ldots,A_n,\tilde{A}_\Lambda\in \mathcal{A}_\Gamma$ be such that $A_\ell\in \mathcal{A}_{X_\ell}$ for all $\ell\in\{1,\ldots,n\}$ while $\tilde{A}_\Lambda\in\tilde{\mathcal{A}}_\Lambda$.
		We would like to consider the decomposition \eqref{Eq: decomposition in eta-free elements} for the element
		\begin{align*}
			A_{S_n^\Lambda}
			:=A_1\cdots A_n\tilde{A}_\Lambda
			\in \mathcal{A}_{S_n^\Lambda}\,,
			\qquad
			S_n^\Lambda:=X_1\cup\ldots X_n\cup\Lambda\,.
		\end{align*}
		Let $S_n:=X_1\cup\ldots\cup X_n$ and set $\Lambda_n:=\Lambda\cap S_n^c$.
		(It may happen that $\Lambda_n=\varnothing$: This does not change the conclusions of the forthcoming discussion.)
		Then $\eta_x(A_{S_n^\Lambda})=0$ for $x\in\Lambda_n$.
		In other words, we already know that $A_{S_n^\Lambda}$ is $\eta$-free at points of $\Lambda$.
		This leads to a refined version of \eqref{Eq: decomposition in eta-free elements}, in particular
		\begin{align}\label{Eq: decomposition in eta-free elements - improved}
			A_{S_n^\Lambda}
			=\sum_{X\subseteq S_n}\tilde{A}_{X\cup\Lambda_n}\,,
		\end{align}
		where the coefficients are recursively defined by
		\begin{multline}\label{Eq: decomposition in eta-free elements - recursive definition - improved}
			\tilde{A}_{\Lambda_n}
			:=\eta_{S_n^\Lambda\setminus\Lambda_n}(A_{S_n^\Lambda})
			\\
			\tilde{A}_{X\cup\Lambda_n}
			:=\eta_{S_n^\Lambda\setminus(X\cup\Lambda_n)}(A_{S_n^\Lambda})
			-\sum_{Y\subset X}\tilde{A}_{Y\cup\Lambda_n}
			\qquad
			\forall X\subseteq S_n\,.
		\end{multline}
		Notice that the above sum over $Y$ is on subsets of $X\subseteq S_n$.
		Proceeding as in the proof of Proposition \ref{Prop: decomposition in eta-free elements} we have
		\begin{align}\label{Eq: decomposition in eta-free elements - explicit form - improved}
			\tilde{A}_{X\cup\Lambda_n}
			=\sum_{Y\subseteq X}(-1)^{|X\setminus Y|}\eta_{S_n^\Lambda\setminus(Y\cup\Lambda_n)}(A_{S_n^\Lambda})\,.
		\end{align}
		The main point is that equations \eqref{Eq: decomposition in eta-free elements - improved}-\eqref{Eq: decomposition in eta-free elements - explicit form - improved} contain sums over subsets $X$ of $S_n$ rather than of $S_n^\Lambda$ ---crucially, the size of the former set does not depend on $\Lambda$.
		This leads to a finer estimate on the norm of $\tilde{A}_{X\cup\Lambda_n}$:
		\begin{align}\label{Eq: eta-free elements - norm bound - improved}
			\|\tilde{A}_{X\cup\Lambda_n}\|
			\leq 2^{|X|}\|A_{S_n^\Lambda}\|\,.
		\end{align}
	\end{enumerate}
\end{remark}

\subsection{Proof of Theorem \ref{Thm: uniqueness for KMS quantum states}}
\label{Subsec: proof of uniqueness for quantum KMS states}

This section is devoted to the proof of Theorem \ref{Thm: uniqueness for KMS quantum states}.
This also covers the proof of Theorem \ref{Thm: uniqueness for KMS quantum states - with commutation} which is technically easier on account of assumption \eqref{Eq: assumption on potential - commutation}, \textit{cf.} Remark \ref{Rmk: proof of uniqueness for KMS quantum states - with commutation}.
The strategy of the proof is conceptually similar to \cite{Bratteli_Robinson_97,Drago_Pettinari_Van_de_Ven_2025,Frohlich_Ueltschi_2015} and requires a few preparations.

\bigskip

As anticipated, we will use the results described in Proposition \ref{Prop: decomposition in eta-free elements} for the specific choice of $\eta_x$ given by the unique $(\beta,\tau^\Psi)$-KMS state on $\mathcal{A}_x$:
\begin{align*}
	\eta_x(A_x):=\frac{\operatorname{tr}(e^{-\beta\Psi_x }A_x)}{\operatorname{tr}(e^{-\beta\Psi_x}A_x)}\,.
\end{align*}
In the forthcoming proof it will be convenient to recall that the normalized trace on $\mathcal{A}_x$ can be expressed as an average over the compact group $\mathcal{U}_x$ of unitary elements of $\mathcal{A}_x$:
\begin{align}\label{Eq: trace as average}
	\operatorname{tr}(A_x)
	=\int_{\mathcal{U}_x}U_xA_xU_x^*\mathrm{d}U_x\,,
\end{align}
where $\mathrm{d}U_x$ denotes the corresponding normalized Haar measure.
In particular this implies that
\begin{align}\label{Eq: Gibbs state as average}
	\eta_x(A_x)
	=\fint_{\mathcal{U}_x}U_xe^{-\beta\Psi_x}A_xU_x^*\mathrm{d}U_x\,,
\end{align}
where $\fint$ indicates that the integral is normalized so that $\eta_x(1_x)=1$.

\bigskip

For later convenience we introduce a suitable Banach space built out of the spaces of $\eta$-free elements.
In particular we consider
\begin{align}\label{Eq: Banach space}
	\underline{\mathsf{X}}
	:=\{\underline{f}=(f_\Lambda)_{\Lambda\Subset\Gamma}\mid
	f_\Lambda\colon\tilde{\mathcal{A}}_\Lambda\to\mathbb{C}\,,\,
	f_\Lambda\textrm{ linear, }
	\|\underline{f}\|_{\underline{\mathsf{X}}}
	:=\sup_{\Lambda\Subset\Gamma}
	\sup_{\tilde{A}_\Lambda\in\tilde{\mathcal{A}}_\Lambda}
	\frac{|f_\Lambda(\tilde{A}_\Lambda)|}{\|\tilde{A}_\Lambda\|}
	<+\infty
	\}\,.
\end{align}
Notice that any $\omega_\Gamma\in S(\mathcal{A}_\Gamma)$ defines an element $\underline{\omega}_\Gamma\in\underline{\mathsf{X}}$ by $ \underline{\omega}_\Gamma(\tilde{A}_\Lambda):=\omega_\Gamma(\tilde{A}_\Lambda)$.
In particular $\|\underline{\omega}_\Gamma\|_{\underline{\mathsf{X}}}=1$.
Moreover, the map
$
S(\mathcal{A}_\Gamma)
\ni\omega_\Gamma
\mapsto\underline{\omega}_\Gamma
\in\underline{\mathsf{X}}
$
is injective because any state $\omega\in S(\mathcal{A}_\Gamma)$ is uniquely determined by its value on the subspaces $\tilde{\mathcal{A}}_\Lambda$, $\Lambda\Subset\Gamma$.
Indeed, Proposition \ref{Prop: decomposition in eta-free elements} ensures that, by taking linear combinations of elements from the latter spaces, one can generate the dense $*$-subalgebra $\dot{\mathcal{A}}_\Gamma$.

\bigskip

We now begin the proof of Theorem \ref{Thm: uniqueness for KMS quantum states}.
Following \cite{Bratteli_Robinson_97,Drago_Pettinari_Van_de_Ven_2025,Frohlich_Ueltschi_2015}, we will show that, given a $(\beta,\tau^\Gamma)$-KMS quantum state $\omega_\Gamma^\beta\in S(\mathcal{A}_\Gamma)$, the corresponding element $\underline{\omega}_\Gamma^\beta\in\underline{\mathsf{X}}$ satisfies a linear equation whose solution is unique for sufficiently small $\beta$ --- 
more precisely, whenever condition \eqref{Eq: uniqueness for KMS quantum states - condition for uniqueness} is fulfilled.

To achieve this, we consider an ordering on $\Gamma$, which is only needed to select a preferred point for each $\Lambda \Subset \Gamma$ by setting $x:=\min_{y\in\Lambda}y$.
Given a $(\beta,\tau^\Gamma)$-KMS state $\omega_\Gamma^\beta\in S(\mathcal{A}_\Gamma)$ we consider $\tilde{A}_\Lambda\in\tilde{\mathcal{A}}_\Lambda$, for a non-empty $\Lambda\Subset\Gamma$.
Then Equation \eqref{Eq: Gibbs state as average} implies
\begin{align*}
	0=\omega_\Gamma^\beta(\eta_x(\tilde{A}_\Lambda))
	=\fint_{\mathcal{U}_x}\omega_\Gamma^\beta(U_xe^{-\beta\Psi_x}\tilde{A}_\Lambda U_x^*)\mathrm{d}U_x\,.
\end{align*}
By Proposition \ref{Prop: dynamics on quasi-local algebra - existence with strengthened assumptions}, $U_x e^{-\beta\Psi_x}$ is analytic for $\tau^\Gamma$. 
Moreover, condition \eqref{Eq: uniqueness for KMS quantum states - condition for uniqueness} implies $2\beta\|\bar{\Phi}\|_{\varepsilon,2\beta}<\varepsilon$, therefore, Remark \ref{Rmk: Dyson series - analytic continuation} applies.
We obtain
\begin{align*}
	\tau^\Gamma_{i\beta}(U_xe^{-\beta\Psi_x})
	=\sum_{n\geq 0}(-1)^n
	\int_0^\beta\mathrm{d}s_1
	\cdots
	\int_0^{s_{n-1}}\mathrm{d}s_n
	\sum_{X_1,\ldots,X_n:x}
	\operatorname{ad}_{\tau^\Psi_{is_1}\bar{\Phi}_{X_1}}
	\cdots
	\operatorname{ad}_{\tau^\Psi_{is_n}\bar{\Phi}_{X_n}}
	(e^{-\beta\Psi_x}U_x)\,,
\end{align*}
where we also used $\tau^\Psi_{i\beta}(U_xe^{-\beta\Psi_x})=e^{-\beta\Psi_x}U_x$.
Thus, the $(\beta,\tau^\Gamma)$-KMS condition implies
\begin{align*}
	0&=\fint_{\mathcal{U}_x}\omega_\Gamma^\beta\Big(\tilde{A}_\Lambda U_x^*\tau^\Gamma_{i\beta}(U_xe^{-\beta\Psi_x})\Big)\mathrm{d}U_x
	\\
	&=\sum_{n\geq 0}(-1)^n
	\int_0^\beta\mathrm{d}s_1
	\cdots
	\int_0^{s_{n-1}}\mathrm{d}s_n
	\sum_{X_1,\ldots,X_n:x}
	\omega_\Gamma^\beta\Big(
	\tilde{A}_\Lambda\fint_{\mathcal{U}_x} U_x^*\operatorname{ad}_{\tau^\Psi_{is_1}\bar{\Phi}_{X_1}}
	\cdots
	\operatorname{ad}_{\tau^\Psi_{is_n}\bar{\Phi}_{X_n}}
	(e^{-\beta\Psi_x}U_x)\mathrm{d}U_x\Big)\,.
\end{align*}
The term for $n=0$ in the above series leads to a contribution
\[
	\omega_\Gamma^\beta\Big(
	\tilde{A}_\Lambda\fint_{\mathcal{U}_x} U_x^* e^{-\beta\Psi_x}U_x\mathrm{d}U_x\Big)
	=\omega_\Gamma^\beta(\tilde{A}_\Lambda)\,,
\]
where we recall that the integral over $\mathcal{U}_x$ is normalized according to equation \eqref{Eq: Gibbs state as average}.
Rearranging the above equality leads to
\begin{multline}\label{Eq: uniqueness for KMS quantum states - linear equation - not explicit}
	\omega_\Gamma^\beta(\tilde{A}_\Lambda)
	=\sum_{n\geq 1}(-1)^{n+1}
	\int_0^\beta\mathrm{d}s_1
	\cdots
	\int_0^{s_{n-1}}\mathrm{d}s_n
	\\
	\sum_{X_1,\ldots,X_n:x}
	\omega_\Gamma^\beta\Big(
	\tilde{A}_\Lambda\fint_{\mathcal{U}_x} U_x^*\operatorname{ad}_{\tau^\Psi_{is_1}\bar{\Phi}_{X_1}}
	\cdots
	\operatorname{ad}_{\tau^\Psi_{is_n}\bar{\Phi}_{X_n}}
	(e^{-\beta\Psi_x}U_x)\mathrm{d}U_x\Big)\,.
\end{multline}
Equation \eqref{Eq: uniqueness for KMS quantum states - linear equation - not explicit} can be interpreted as a linear equation within $\underline{\mathsf{X}}$.
Indeed, the left-hand side coincide with $\underline{\omega}_\Gamma^\beta(\tilde{A}_\Lambda)$.
For the right-hand side we recall Proposition \ref{Prop: decomposition in eta-free elements} and decompose
\begin{align*}
    \tilde{A}_\Lambda\fint_{\mathcal{U}_x} U_x^*\operatorname{ad}_{\tau^\Psi_{is_1}\bar{\Phi}_{X_1}}
	\cdots
	\operatorname{ad}_{\tau^\Psi_{is_n}\bar{\Phi}_{X_n}}
	(e^{-\beta\Psi_x}U_x)\mathrm{d}U_x
	=\sum_{X\subseteq S_n}
	\tilde{B}_{X\cup\Lambda_n}\,,
\end{align*}
where we made use of Equation \eqref{Eq: decomposition in eta-free elements - improved} where $S_n:=X_1\cup\ldots\cup X_n$.
Let now $L_\beta\colon\underline{\mathsf{X}}\to\underline{\mathsf{X}}$ be defined by $(L_\beta\underline{f})_\varnothing:=0$ while, for non-empty $\Lambda\Subset\Gamma$,
\begin{align*}
	(L_\beta\underline{f})_\Lambda(\tilde{A}_\Lambda)
	:=\sum_{n\geq 1}(-1)^{n+1}
	\int_0^\beta\mathrm{d}s_1
	\cdots
	\int_0^{s_{n-1}}\mathrm{d}s_n
	\sum_{X_1,\ldots,X_n:x}
	\sum_{X\subseteq S_n}
	f_{X\cup\Lambda_n}(
	\tilde{B}_{X\cup\Lambda_n})\,.
\end{align*}
With this choice we find that, for any $(\beta,\tau^\Gamma)$-KMS state $\omega_\Gamma^\beta$ on $\mathcal{A}_\Gamma$, the corresponding element $\underline{\omega}_\Gamma^\beta\in\underline{\mathsf{X}}$ solves
\begin{align}\label{Eq: uniqueness for KMS quantum states - linear equation}
	(I-L_\beta)\underline{\omega}_\Gamma^\beta=\underline{\delta}\,,
\end{align}
where $\underline{\delta}\in\underline{\mathsf{X}}$ is defined by
\begin{align}\label{Eq: uniqueness for KMS quantum states - linear equation - source term}
	\underline{\delta}_X(\tilde{A}_X)
	=\begin{dcases}
		\tilde{A}_\varnothing&X=\varnothing
		\\
		0&\textrm{otherwise}
	\end{dcases}\,.
\end{align}
We now prove that $L_\beta \colon \underline{\mathsf{X}} \to \underline{\mathsf{X}}$ is a well-defined bounded linear operator, and moreover that $\|L_\beta\| < 1$ provided condition \eqref{Eq: uniqueness for KMS quantum states - condition for uniqueness} is met.  
This guarantees uniqueness of the solution to Equation \eqref{Eq: uniqueness for KMS quantum states - linear equation}.  
Since every $(\beta,\tau^\Gamma)$-KMS state on $\mathcal{A}_\Gamma$ gives rise to such a solution, and since the map
\begin{align*}
	S(\mathcal{A}_\Gamma)\ni\omega_\Gamma
	\mapsto\underline{\omega}_\Gamma \in \underline{\mathsf{X}}\,,
\end{align*}
is injective, it follows that $\omega_\Gamma^\beta$ is unique.

To prove the well-definiteness of $L_\beta$ we first recall that, by Remark \ref{Rmk: decomposition in eta-free elements - improved}, we have
\begin{align*}
	\|\tilde{B}_{X\cup\Lambda_n}\|
	\leq 2^{|X|}\|\tilde{A}_\Lambda\|\Big\|
	\fint_{\mathcal{U}_x} U_x^*\operatorname{ad}_{\tau^\Psi_{is_1}\bar{\Phi}_{X_1}}
	\cdots
	\operatorname{ad}_{\tau^\Psi_{is_n}\bar{\Phi}_{X_n}}
	(e^{-\beta\Psi_x}U_x)\mathrm{d}U_x
	\Big\|\,.
\end{align*}
Moreover, by expanding the nested commutators and invoking once again Equation \eqref{Eq: Gibbs state as average}, we obtain
\begin{multline*}
	\fint_{\mathcal{U}_x}
	U_x^*
	(\operatorname{ad}_{\tau^\Psi_{is_1}\bar{\Phi}_{X_1}}
	\cdots
	\operatorname{ad}_{\tau_{is_n}\bar{\Phi}_{X_n}})(e^{-\beta\Psi_x}U_x)\mathrm{d}U_x
	\\
	=\sum_{\wp\in\{0,1\}^n}
	(-1)^{n-|\wp|}
	\eta_x\Big(\tau^\Psi_{is_1}(\bar{\Phi}_{X_1})^{(\wp_1)}
	\cdots
	\tau^\Psi_{i s_n}(\bar{\Phi}_{X_n})^{(\wp_n)}\Big)
	\tau^\Psi_{is_n}(\bar{\Phi}_{X_n})^{(1-\wp_n)}
	\dots
	\tau^\Psi_{i s_1}\bar{\Phi}_{X_1}^{(1-\wp_1)}\,,
\end{multline*}
where $\wp=(\wp_1,\ldots,\wp_n)$, $|\wp|:=\wp_1+\ldots+\wp_n$ while
\begin{align*}
	\tau^\Psi_{i s_\ell}(\bar{\Phi}_{X_\ell})^{(\wp_\ell)}
	:=\begin{dcases}
	\tau^\Psi_{i s_\ell}(\bar{\Phi}_{X_\ell})
	&\wp_\ell=1
	\\
	1
	&\wp_\ell=0
	\end{dcases}\,.
\end{align*}
This leads to
\begin{align*}
	\Big\|
	\fint_{\mathcal{U}_x} U_x^*\operatorname{ad}_{\tau^\Psi_{is_1}\bar{\Phi}_{X_1}}
	\cdots
	\operatorname{ad}_{\tau^\Psi_{is_n}\bar{\Phi}_{X_n}}
	(e^{-\beta\Psi_x}U_x)\mathrm{d}U_x
	\Big\|
	\leq 2^n\prod_{\ell=1}^ne^{2\beta\|\Psi\|_{X_\ell}}\|\bar{\Phi}_{X_\ell}\|\,,
\end{align*}
so that overall
\[
	\|\tilde{B}_{X\cup\Lambda_n}\|
	\leq 2^{|X|}\|\tilde{A}_\Lambda\| 2^n\prod_{\ell=1}^n e^{2\beta\|\Psi\|_{X_\ell}}\|\bar{\Phi}_{X_\ell}\|\,.
\]
The above estimate and Lemma \ref{Lem: sum on subsets estimate} implies that
\begin{align*}
	|(L_\beta\underline{f})_\Lambda(\tilde{A}_\Lambda)|
	&\leq\|\underline{f}\|\|\tilde{A}_\Lambda\|
	\sum_{n\geq 1}\frac{(2\beta)^n}{n!}
	\sum_{X_1,\ldots,X_n:x}
	\prod_{\ell=1}^n 3^{|X_\ell|}e^{2\beta\|\Psi\|_{X_\ell}}\|\bar{\Phi}_{X_\ell}\|
	\\
	&\leq\|\underline{f}\|\|\tilde{A}_\Lambda\|e^{\varepsilon}
	\sum_{n\geq 1}\Big(\frac{6\beta}{\varepsilon}
	\|\bar{\Phi}\|_{\varepsilon+\log 3,2\beta}\Big)^n
	\\
	&\leq\|\underline{f}\|\|\tilde{A}_\Lambda\|e^{\varepsilon}
	\frac{\frac{6\beta}{\varepsilon}\|\bar{\Phi}\|_{\varepsilon+\log 3,2\beta}}{1-\frac{6\beta}{\varepsilon}\|\bar{\Phi}\|_{\varepsilon+\log 3,2\beta}}\,,
\end{align*}
where the convergence of the geometric series is granted by condition \eqref{Eq: uniqueness for KMS quantum states - condition for uniqueness}.
It follows that $L_\beta\in\mathcal{B}(\underline{\mathsf{X}})$ with
\begin{align*}
	\|L_\beta\|
	\leq
	e^{\varepsilon}
	\frac{\frac{6\beta}{\varepsilon}\|\bar{\Phi}\|_{\varepsilon+\log 3,2\beta}}{1-\frac{6\beta}{\varepsilon}\|\bar{\Phi}\|_{\varepsilon+\log 3,2\beta}}
	<1\,,
\end{align*}
where the last equality is equivalent to condition \eqref{Eq: uniqueness for KMS quantum states - condition for uniqueness}.
This completes the proof.

\begin{remark}\label{Rmk: proof of uniqueness for KMS quantum states - with commutation}
	The proof of Theorem \ref{Thm: uniqueness for KMS quantum states - with commutation} follows the same steps as that of Theorem \ref{Thm: uniqueness for KMS quantum states}.
	However, assumption \eqref{Eq: assumption on potential - commutation} yields a technical simplification, since
	\[
	\tau^\Gamma_{i\beta}(U_x e^{-\beta\Psi_x})
	=
	e^{-\beta\Psi_x}\,\bar{\tau}^\Gamma_{i\beta}(U_x)\,.
	\]
	This identity leads to a simplified version of equation \eqref{Eq: uniqueness for KMS quantum states - linear equation - not explicit}, namely
	\begin{multline*}
	\omega_\Gamma^\beta(\tilde{A}_\Lambda)
	=
	\sum_{n\geq 1} (-1)^{n+1}
	\int_0^\beta \mathrm{d}s_1
	\cdots
	\int_0^{s_{n-1}} \mathrm{d}s_n
	\\
	\sum_{X_1,\ldots,X_n:x}
	\omega_\Gamma^\beta\Big(
	\tilde{A}_\Lambda
	\fint_{\mathcal{U}_x}
	U_x^*\operatorname{ad}_{\bar{\Phi}_{X_1}}
	\cdots
	\operatorname{ad}_{\bar{\Phi}_{X_n}}
	\big(e^{-\beta\Psi_x}U_x\big)
	\,\mathrm{d}U_x
	\Big)\,.
	\end{multline*}
	This expression yields a linear equation in $\underline{\mathsf{X}}$ of the form \eqref{Eq: uniqueness for KMS quantum states - linear equation}, where the operator norm of $L_\beta$ can be estimated as
	\[
	\|L_\beta\|
	\leq
	e^\varepsilon
	\frac{\frac{6\beta}{\varepsilon}\|\bar{\Phi}\|_{\varepsilon+\log 3}}
	{1 - \frac{6\beta}{\varepsilon}\|\bar{\Phi}\|_{\varepsilon+\log 3}}\,.
	\]
	In particular, this estimate avoids the need to invoke the strengthened norm \eqref{Eq: assumption on potential - norm on potentials strengthened}.
	The remainder of the proof then proceeds unchanged.
\end{remark}


\section{Consequences of Theorem \ref{Thm: uniqueness for KMS quantum states}}
\label{Sec: consequences of (main) Theorem}

In this section we present several direct consequences of Theorem \ref{Thm: uniqueness for KMS quantum states}.
Specifically, in Section \ref{Subsec: Comparison of the subcritical inverse temperatures} we compare our subcritical inverse temperature $\beta_{\textsc{u}}$ with those already available in the literature.
It is worth emphasizing that our result does not necessarily provide a sharper value, but it does yield a value that is uniform with respect to the dimension of the single-site Hilbert space $\mathcal{H}_x$.
Section \ref{Subsec: Uniqueness result for classical KMS states} is devoted to the proof of Theorem \ref{Thm: uniqueness for KMS classical states}, which serves as the classical analogue of Theorem \ref{Thm: uniqueness for KMS quantum states}.
This demonstrates that our proof is sufficiently robust to carry over to the classical setting.
Finally, in Section \ref{Subsec: common subcritical region for quantum and classical systems}, following \cite{Drago_Pettinari_Van_de_Ven_2025}, we briefly discuss the implications of this result.
In particular, the combination of Theorems \ref{Thm: uniqueness for KMS quantum states}-\ref{Thm: uniqueness for KMS classical states} ensures the existence of a common high-temperature regime in which both classical and quantum phase transitions are absent.

\subsection{Comparison of the subcritical inverse temperatures}
\label{Subsec: Comparison of the subcritical inverse temperatures}

The novelty of Theorems \ref{Thm: uniqueness for KMS quantum states - with commutation}-\ref{Thm: uniqueness for KMS quantum states} lies in the fact that the value of the subcritical inverse temperature $\beta_{\textsc{u}}$ is independent of the dimension of the Hilbert spaces $\mathcal{H}_x$.
Considering in particular Theorem \ref{Thm: uniqueness for KMS quantum states - with commutation}, it is possible to compare our value for $\beta_{\textsc{u}}$ with those already present in the literature.
We will perform this comparison with two precise models, namely the anisotropic Heisenberg model, and the Ising model with a staggered external magnetic field.

\bigskip

From Theorem \ref{Thm: uniqueness for KMS quantum states - with commutation} we obtain the following "optimal" inverse subcritical temperature $\beta_{\textsc{u}}$:
\begin{align}\label{Eq: estimate of subcritical inverse temperature - commutation}
	\beta_{\textsc{u}}\|\bar{\Phi}\|_{\varepsilon+\log 3}
	:=\frac{1}{6}\frac{\varepsilon}{1+e^\varepsilon}\,.
\end{align}
We now compare the latter with the estimates provided by \cite[Prop. 6.2.45, Thm. 6.2.46]{Bratteli_Robinson_97}.
As we will see, the subcritical inverse temperature provided by Theorem \ref{Thm: uniqueness for KMS quantum states} is larger than the one obtained by applying the results of \cite{Bratteli_Robinson_97}.
In particular, our results provide a larger regime of uniqueness for KMS quantum states.

For simplicity, we specialize to the case $\Gamma = \mathbb{Z}^{\nu}$ and $\mathcal{H}_x=\mathbb{C}^{2j+1}$, $j\in\mathbb{Z}_+/2$.
Additionally, we will denote by $\Gamma_b$ the set of bonds, i.e. pairs of points $\{x,y\}$ in $\Gamma$ with graph distance $\text{d}(x,y)$ equal to 1.
The \textbf{quantum anisotropic Heisenberg model} is described by the interaction potential \begin{equation}
    \Phi_\Lambda:=\begin{dcases}
    	-J(x,y)(\delta (S^1_x S^1_y + S^2_x S^2_y) + S^3_x S^3_y) &\Lambda= \{x,y\} \in \Gamma_b \\
    	0 & \text{otherwise}
    \end{dcases},
\end{equation} with $$J\colon \Gamma_b \to \mathbb{R},\qquad\sup_{\{x,y\}\in \Gamma_b} |J(x,y)| <+\infty,$$
and $(S^1_x,S^2_x,S^3_x)$ the spin operators at site $x\in\Gamma$, with spin $j$.
This potential does not have a single-site component, so that we have $\Phi = \overline{\Phi}$ and we can evaluate the norm in Theorem \ref{Thm: uniqueness for KMS quantum states - with commutation} as \begin{equation}
	\| \Phi\|_{\varepsilon +\log3}= 3 e^{\varepsilon} \sup_{x\in \Gamma}\sum_{y:\{y,x\}\in\Gamma_b}|J(x,y)| \| \delta(S_x^1 S_y^1 + S_x^2 S_y^2) + S^3_x S^3_y\|.
\end{equation}
Hence, the corresponding inverse critical temperature reads
\begin{equation}
\beta_{\textsc{u}} = \frac{1}{18 }\frac{1}{\sup_{x\in \Gamma}\sum_{y:\{y,x\}\in\Gamma_b}|J(x,y)| \| \delta(S_x^1 S_y^1 + S_x^2 S_y^2) + S^3_x S^3_y\|} \frac{\varepsilon e^{-\varepsilon}}{1+e^\varepsilon}\,.
\end{equation}
The optimal value $\bar{\varepsilon}$ for $\varepsilon$ can be determined numerically as $\bar{\varepsilon} \approx 0.607$, maximizing \begin{equation*}
\frac{\bar{\varepsilon}e^{-\bar{\varepsilon}}}{1+ e^{\bar{\varepsilon}}} \approx 0.117\,.
\end{equation*}

In comparison, the "optimal" inverse temperature $\beta_{\textsc{u}}^{\textsc{br}}$ obtained by using \cite[Prop. 6.2.45]{Bratteli_Robinson_97} is given by
\begin{equation}
    \beta_{\textsc{u}}^{\textsc{br}}
    =\frac{1}{2(2j+1)^2}\frac{1}{\sup_{x\in \Gamma}\sum_{y:\{y,x\}\in\Gamma_b}|J(x,y)| \| \delta(S_x^1 S_y^1 + S_x^2 S_y^2) + S^3_x S^3_y\|}
    \frac{\varepsilon e^{-\varepsilon}}{1+ e^{\varepsilon} \frac{(2j+1)^3}{2j}}\,.
\end{equation}
In this case the optimal value $\bar{\varepsilon}_j$ for $\varepsilon$ depends on $j$: for $j = 1/2$ we have $\bar{\varepsilon}_{1/2}\approx 0.518 $ and for large values of the spin $\bar{\varepsilon}_j$ approach $1/2$ from above.
The ratio between the two subcritical inverse temperatures is then 
\begin{equation}
\frac{\beta_{\textsc{u}}^{\textsc{br}}}{\beta_{\textsc{u}}} = \frac{9}{(2j+1)^2}\frac{\bar{\varepsilon}_j e^{-\bar{\varepsilon}_j }  }{1+ e^{\bar{\varepsilon}_j}\frac{(2j+1)^3}{2j}}
\frac{1+e^{\bar{\varepsilon}}}{\bar{\varepsilon}e^{-\bar{\varepsilon}}}\,.
\end{equation}
Fixing $j = 1/2$, this yields \begin{equation}
\frac{\beta_{\textsc{u}}^{\textsc{br}}}{\beta_{\textsc{u}}} \approx 0.412,
\end{equation} which shows how Theorem \ref{Thm: uniqueness for KMS quantum states - with commutation} increases the inverse critical temperature predicted by \cite[Prop. 6.2.45]{Bratteli_Robinson_97} by more than a factor of two. Furthermore, since the estimate provided by our theorem is independent of the single-site dimension $2j+1$, its accuracy remains unaltered as $j$ is increased, whereas the estimate of \cite[Prop. 6.2.45]{Bratteli_Robinson_97} gets worse.

\bigskip

We now turn to the \textbf{quantum Ising model with a staggered magnetic field}, defined by \begin{equation}
	\Phi_\Lambda := \begin{dcases}
		J S^3_x S^3_y & \Lambda = \{x,y\}\in \Gamma_b\\
		(-1)^{\text{d}(x,0)} B S^3_x & \Lambda = \{x\} \\
		0 & \text{otherwise}
	\end{dcases},
\end{equation} where $B>0$ represents an external magnetic field applied to the systems. Applying directly Theorem \ref{Thm: uniqueness for KMS quantum states} or the result in \cite[Prop. 6.2.45]{Bratteli_Robinson_97} would lead to a dependence on the external magnetic field $B$ for the subcritical inverse temperature. However, since the single-site potential commutes with the interaction, we can employ Theorem \ref{Thm: uniqueness for KMS quantum states - with commutation} and Proposition \cite[Prop. 6.2.46]{Bratteli_Robinson_97}  to subtract the commuting potential from $\Phi$.
Thus,  Theorem \ref{Thm: uniqueness for KMS quantum states - with commutation} applies and we have the following subcritical inverse temperature
\begin{equation}
    \beta_{\textsc{u}} = \frac{1}{36 \nu |J|}\frac{\varepsilon e^{-\varepsilon} }{1+ e^{\varepsilon}}\,,
\end{equation}
which is maximized for $\bar{\varepsilon}\approx 0.607$.
The "optimal" subcritical inverse temperature provided by \cite[Prop. 6.2.46]{Bratteli_Robinson_97} reads
\begin{equation}
    \beta_{\textsc{u}}^{\textsc{br}}
    =\frac{1}{8 \nu |J| (2j+1)^3}\frac{\varepsilon e^{-\varepsilon}}{1 + 2(2j+1)^4 e^{\varepsilon}}\,.
\end{equation}
If we fix $j=1/2$, this value is maximized at $\bar{\varepsilon}_{1/2} \approx 0.505$. We can again compare the two inverse critical temperatures by evaluating their ratio \begin{equation}
    \frac{\beta^{\textsc{br}}_\textsc{u}}{\beta_{\textsc{u}}} = \frac{9}{2(2j+1)^3} \frac{\bar{\varepsilon}_j e^{-\bar{\varepsilon}_j }  }{1+ e^{\bar{\varepsilon}_j}2(2j+1)^4 }
    \frac{1+e^{\bar{\varepsilon}}}{\bar{\varepsilon}e^{-\bar{\varepsilon}}}
    \stackrel{j=\sfrac{1}{2}}{\approx}
    0.027\,,
\end{equation}
where the final estimate has been computed for $j=1/2$.
Once again, we find $\beta_{\textsc{u}}>\beta_{\textsc{u}}^{\textsc{br}}$.

\subsection{Uniqueness result for classical KMS states}
\label{Subsec: Uniqueness result for classical KMS states}

The goal of this section is to prove Theorem \ref{Thm: uniqueness for KMS classical states}, which is the classical counterpart of Theorem \ref{Thm: uniqueness for KMS quantum states}.
Barring minor differences related to the different classical setting, the proof is extremely similar to the quantum counterpart, the assumptions being slightly weaker.
We will also compare Theorems \ref{Thm: uniqueness for KMS quantum states}-\ref{Thm: uniqueness for KMS classical states} in relation with the strict deformation quantization discussed in \cite{Drago_Pettinari_Van_de_Ven_2025,Landsman_1998_I} and prove Corollary \ref{Cor: absence of classical and quantum phase transitions}.

\bigskip

We will now describe the main ingredients to state Theorem \ref{Thm: uniqueness for KMS classical states} precisely.
In what follows, we will be particularly interested in $\mathbb{S}^2$-valued classical spin system.
We will briefly summarize the main features, referring to \cite{Drago_Pettinari_Van_de_Ven_2025,Friedli_Velenik_2017} for a more detailed discussion.
We will consider the configuration space $\Omega$ which consists of functions $\Gamma\to\mathbb{S}^2$.
Thus, $\Omega=(\mathbb{S}^2)^\Gamma$, which is compact in the product topology.
The corresponding algebra of quasi-local observables is then defined by $\mathscr{A}_\Gamma:=C(\Omega)$.
The latter $C^*$-algebra can be seen as the direct limit of the net of $C^*$-algebras $\{\mathscr{A}_\Lambda\}_{\Lambda\Subset\Gamma}$, where $\mathscr{A}_\Lambda:=C(\Omega_\Lambda)$ with $\Omega_\Lambda=(\mathbb{S}^2)^\Lambda$ is the configuration space localized at $\Lambda$.
As in section \ref{Sec: Uniqueness result for quantum KMS states} we will not write explicitly the natural inclusions $\mathscr{A}_\Lambda\hookrightarrow\mathscr{A}_{\Lambda'}$, $\Lambda\subset\Lambda'$.

For each $\Lambda\Subset\Gamma$, $\Omega_\Lambda$ is a compact symplectic manifold once equipped with the tensor product of the symplectic form on $\mathbb{S}^2$.
Thus, we may consider the corresponding Poisson bracket $\{a_\Lambda,a'_\Lambda\}_\Lambda$ for any pairs in $a_\Lambda,a'_\Lambda\in\dot{\mathscr{A}}_\Lambda:=C^\infty(\Omega_\Lambda)$.
This implies that $\mathscr{A}_\Lambda$ is a $C^*$-Poisson algebra.
Similarly, $\mathscr{A}_\Gamma$ is a $C^*$-Poisson algebra once equipped with the Poisson bracket
\begin{align}
	\{\;,\;\}
	\colon\dot{\mathscr{A}}_\Gamma
	\times\dot{\mathscr{A}}_\Gamma
	\to\dot{\mathscr{A}}_\Gamma
	\qquad
	\dot{\mathscr{A}}_\Gamma
	:=\bigcup_{\Lambda\Subset\Gamma}\dot{\mathscr{A}}_\Lambda\,,
	\qquad
	\{a_\Lambda,a_{\Lambda'}\}
	:=\{a_\Lambda,a_{\Lambda'}\}_{\Lambda\cap\Lambda'}\,.
\end{align}

Next, we move to the discussion on the classical KMS condition.
The latter stands as the classical counterparts of the quantum KMS condition discussed in Section \ref{Subsec: quantum quasi-local algebra}.
Similarly to the quantum setting described in the previous sections, we will consider a family $\varphi=\{\varphi_\Lambda\}_{\Lambda\Subset\Gamma}$ of self-adjoint potentials.
In particular we assume that $\varphi_\varnothing=0$, $\varphi_\Lambda=\varphi_\Lambda^*\in\dot{\mathscr{A}}_\Lambda$ for all $\Lambda\Subset\Gamma$.
We will also consider a decomposition similar to \eqref{Eq: decomposition of quantum potential}, that is, we will set
\begin{align}\label{Eq: decomposition of classical potential}
	\bar{\varphi}_\Lambda
	:=\begin{dcases}
	\varphi_\Lambda
	&|\Lambda|\geq 2
	\\
	0
	&|\Lambda|=1
	\end{dcases}\,,
	\qquad
	\psi_x:=\varphi_{\{x\}}\,.
\end{align}
Contrary to the KMS quantum condition, which focuses on the dynamics $\tau^\Gamma$ induced by the family of quantum potentials $\Phi$, the KMS classical condition involves only a derivation $\delta_\Gamma\colon\dot{\mathscr{A}}_\Gamma\to\mathscr{A}_\Gamma$ on the algebra of quasi-local observables $\mathscr{A}_\Gamma$.
This leads to a slightly different condition on the family of classical potential $\varphi$ compared to condition \eqref{Eq: assumption on potential - norm on potentials}.
In particular, in what follows we will assume that
\begin{align}\label{Eq: assumption on potential - C1 norm on potentials}
	\sup_{x\in\Gamma}
	\sum_{\Lambda\ni x}
	\|\bar{\varphi}_\Lambda\|_{C^1(\Omega_\Lambda)}
	<+\infty\,.
\end{align}
Compared with \eqref{Eq: assumption on potential - norm on potentials}, condition \eqref{Eq: assumption on potential - C1 norm on potentials} is slightly more permissive for what concern the decay properties of the norm of the potentials ---Roughly speaking, $\varepsilon=0$.
However, the norm involved in condition \eqref{Eq: assumption on potential - C1 norm on potentials} is the $C^1$-norm on $\Omega_\Lambda$ rather than the natural $C^0$-norm.
This is motivated by the very definition of the derivation $\delta_\Gamma$ associated with $\varphi$, which reads
\begin{align}
	\delta_\Gamma(a_\Lambda)
	:=\sum_{\substack{X\Subset\Gamma\\X\cap\Lambda\neq\varnothing}}
	\{a_\Lambda,\varphi_X\}
	\qquad
	\forall a_\Lambda\in\dot{\mathscr{A}}_\Lambda
	\subset\dot{\mathscr{A}}_\Gamma\,.
\end{align}
Notice that no assumptions are required on the single-site potentials $\{\psi_x\}_{x\in\Gamma}$, since their contribution can always be incorporated into the action of a local Hamiltonian $h^\Psi_\Lambda$:
\begin{align*}
	\delta_\Gamma(a_\Lambda)
	=\sum_{\substack{X\Subset\Gamma\\X\cap\Lambda\neq\varnothing}}
	\{a_\Lambda,\bar{\varphi}_X\}
	+\{a_\Lambda,h^\psi_\Lambda\}\,,
	\qquad
	h^\psi_\Lambda
	:=\sum_{x\in\Lambda}\psi_x\,.
\end{align*}
This is in accordance with the quantum counterpart $\Psi$ of the single-site potential, whose associated dynamics $\tau^\Psi$ is readily well-defined on $\mathscr{A}_\Gamma$.

For the rest of this section we will focus on KMS classical states on $\mathscr{A}_\Gamma$ relative to $\delta_\Gamma$.
As mentioned in the introduction, the latter are states on $\mathscr{A}_\Gamma$ fulfilling condition \eqref{Eq: classical KMS condition}.
We will refer to them as $\delta_\Gamma$-KMS for short or, if a specific inverse temperature $\beta\geq 0$ has been fixed, we will speak of $(\beta,\delta_\Gamma)$-KMS classical states.

The existence of $\delta_\Gamma$-KMS classical states can be proved by a weak* compactness argument.
In particular, one first consider $\delta_\Lambda$-KMS states on $\mathscr{A}_\Lambda$, $\Lambda\Subset\Gamma$, where $\delta_\Lambda:=\{\;,h_\Lambda\}$ is the Hamiltonian vector field associated with $h_\Lambda=\sum_{X\subseteq\Lambda}\varphi_X$.
According to \cite{Bordermann_Hartmann_Romer_Waldmann_98,Drago_Waldmann_2024}, for any $\beta\geq 0$, there exists a unique $\delta_\Lambda$-KMS classical state $\omega_\Lambda^\beta\in S(\mathscr{A}_\Lambda)$, which is given by the classical Gibbs state
\begin{align*}
	\omega_\Lambda^\beta(a_\Lambda)
	:=\frac{\int_{\Omega_\Lambda}e^{-\beta h_\Lambda}a_\Lambda \mathrm{d}\mu_\Lambda}{\int_{\Omega_\Lambda}e^{-\beta h_\Lambda}\mathrm{d}\mu_\Lambda}\,,
\end{align*}
where $\mu_\Lambda$ is the standard Liouville measure on $\Omega_\Lambda$.
At this stage one observes that, on account of condition \eqref{Eq: assumption on potential - C1 norm on potentials}, $\delta_\Gamma=\textrm{s-}\lim_{\Lambda\uparrow\Gamma}\delta_\Lambda$ while any weak* limit point of the sequence $(\omega_\Lambda^\beta)_{\Lambda\Subset\Gamma}$ (which exists by Banach-Alaoglu theorem) defines a $(\beta,\delta_\Gamma)$-KMS state on $\mathscr{A}_\Gamma$.

We will now discuss the proof of Theorem \ref{Thm: uniqueness for KMS classical states} which concerns with the uniqueness of $(\beta,\delta_\Gamma)$-KMS classical states for sufficently low inverse temperature $\beta$.
Remarkably, the latter essentially follows the same steps of the proof of Theorem \ref{Thm: uniqueness for KMS quantum states}.
In particular, it is proven that any $(\beta,\delta_\Gamma)$-KMS classical state identifies a solution to a linear equation in a suitable Banach space.
For sufficiently low $\beta$ the latter equation has a unique solution, leading to the uniqueness of $(\beta,\delta_\Gamma)$-KMS classical states.
The difference between the proof of theorems \ref{Thm: uniqueness for KMS quantum states}-\ref{Thm: uniqueness for KMS classical states} lies in the derivation of the linear equation ---where the KMS condition is used--- and the proof that such equation has a unique solution.

To begin with, we recall an invariance property of $\delta_\Gamma$-KMS classical states, \textit{cf.} \cite[Prop. 5]{Drago_Waldmann_2024}, which provides an "integrated" version of the KMS classical condition which is easier to deal with.
To set the stage, let $x\in\Gamma$ and $\varrho_x\in SO(3)$.
We define $\varrho_x\colon\Omega_\Gamma\to\Omega_\Gamma$ by setting
\begin{align*}
	(\varrho_x\sigma)(x):=
	\begin{dcases}
	\sigma(y)
	&y\neq x
	\\
	\sigma(\varrho_x^{-1}x)
	&y=x
	\end{dcases}
	\,.
\end{align*}
The corresponding pull-back action on $\mathscr{A}_\Gamma$ will be denoted by $\hat{\varrho}_x$.
Then if $\omega_\Gamma^\beta\in S(\mathscr{A}_\Gamma)$ is a $(\beta,\delta_\Gamma)$-KMS classical states and $a_\Lambda\in\mathscr{A}_\Lambda$, $\Lambda\Subset\Gamma$, it holds
\begin{align}\label{Eq: invariance of KMS classical states under G action}
	\omega_\Gamma^\beta(a_\Lambda)
	=\omega_\Gamma^\beta\Big(
	e^{\beta\sum_{X\ni x}(I-\hat{\varrho}_x)\varphi_X}
	\hat{\varrho}_xa_\Lambda
	\Big)\,.
\end{align}	
Next, we observe that the results of Section \ref{Subsec: decomposition in eta-free elements} generalizes immediately to the case of the classical quasi-local algebra $\mathscr{A}_\Gamma$.
Indeed, the specific structure of the algebras $\mathcal{A}_x$ is not essential in the proof of Proposition \ref{Prop: decomposition in eta-free elements} and Remark \ref{Rmk: decomposition in eta-free elements - improved}.
In fact, Proposition \ref{Prop: decomposition in eta-free elements} and Remark \ref{Rmk: decomposition in eta-free elements - improved} naturally extend to any inductive system of $C^*$-algebras $\{\mathfrak{A}_\Lambda\}_{\Lambda \Subset \Gamma}$.

For the classical quasi-local algebra $\mathscr{A}_\Gamma$ we choose $\eta_x$ as follows:
\begin{align*}
	\eta_x(a_x) := \fint_{\Omega_x} e^{-\beta\psi_x}a_x \,\mathrm{d}\mu\,,
\end{align*}
where $\mu$ is the standard Liouville measure on $\mathbb{S}^2$ while $\fint$ indicates normalization.
Notice that $\eta_x$ is nothing but the unique $(\beta,\delta_\psi)$-KMS classical state on $\mathscr{A}_x$ where $\delta_\psi(a_x):=\{a_x,\psi_x\}$.
This choice is mirroring the one already done in the quantum setting, where $\eta_x$ was chosen to be the unique $(\beta,\tau^\Psi)$-KMS quantum state on $\mathcal{A}_x$.

We will denote by $\tilde{\mathscr{A}}_\Lambda$ the corresponding spaces of $\eta$-free elements, \textit{cf.} Definition \ref{Eq: eta-free elements - definition}.
By definition, $a_\Lambda\in\tilde{\mathscr{A}}_\Lambda$ if and only if $\eta_x(a_\Lambda)=0$ for all $x\in\Lambda$.
We will denote by $\underline{\mathsf{X}}$ the corresponding Banach space defined as in \eqref{Eq: Banach space}.
Furthermore, similarly to \eqref{Eq: Gibbs state as average}, the action of $\eta_x$ can be conveniently be written as
\begin{align}\label{Eq: Gibbs state as average - classical case}
	\eta_x(a_x)
	=\fint_{SO(3)}\hat{\varrho}_x[e^{-\beta\psi_x}a_x]\mathrm{d}\varrho_x\,,
\end{align}
where $\mathrm{d}\varrho_x$ denotes the Haar measure of $SO(3)$.

\begin{proof}[Proof of Theorem \ref{Thm: uniqueness for KMS classical states}]
	Let $\omega_\Gamma^\beta\in S(\mathscr{A}_\Gamma)$ be a $(\beta,\delta_\Gamma)$-KMS classical state.
	For any $\Lambda\Subset\Gamma$ let $x:=\min_{y\in\Lambda}y$ where we considered an arbitrary but fixed ordering on $\Gamma$.
	For $\tilde{a}_\Lambda\in\tilde{\mathscr{A}}_\Lambda$ we have
	\begin{align*}
		0&=\omega_\Gamma^\beta\Big(
		\fint_{\Omega_x}e^{-\beta\psi_x}\tilde{a}_\Lambda\mathrm{d}\mu_x
		\Big)
		&\tilde{a}_\Lambda\in\tilde{\mathscr{A}}_\Lambda
		\\
		&=\fint_{SO(3)}\omega_\Gamma^\beta(\hat{\varrho}_x[e^{-\beta\psi_x}\tilde{a}_\Lambda])\mathrm{d}\varrho_x
		&\textrm{Eq. }\eqref{Eq: Gibbs state as average - classical case}
		\\
		&=\fint_{SO(3)}\omega_\Gamma^\beta(e^{-\beta\psi_x}\tilde{a}_\Lambda
		e^{\beta\sum_{X\ni x}(I-\hat{\varrho}_x)\varphi_X})
		\mathrm{d}\varrho_x
		&\textrm{Eq. }\eqref{Eq: invariance of KMS classical states under G action}
		\\
		&=\omega_\Gamma^\beta\Big(\tilde{a}_\Lambda\fint_{SO(3)}
		e^{\beta\sum_{X\ni x}(I-\hat{\varrho}_x)\bar{\varphi}_X}e^{-\beta\hat{\varrho}_x\psi_x}
		\mathrm{d}\varrho_x\Big)
		\\
		&=\omega_\Gamma^\beta\Big(\tilde{a}_\Lambda\fint_{SO(3)} (e^{\beta\sum_{X\ni x}(I-\hat{\varrho}_x)\bar{\varphi}_X}-1)e^{-\beta\hat{\varrho}_x\psi_x}
		\mathrm{d}\varrho_x\Big)
		+\omega_\Gamma^\beta(\tilde{a}_\Lambda)\,,
	\end{align*}
	where in the last equality we used that the integral over $SO(3)$ is normalized according to equation \eqref{Eq: Gibbs state as average - classical case}.
	Overall we find
	\begin{align*}
		\omega_\Gamma^\beta(\tilde{a}_\Lambda)
		=\omega_\Gamma^\beta\Big(\tilde{a}_\Lambda\fint_{SO(3)} (1-e^{\beta\sum_{X\ni x}(I-\hat{\varrho}_x)\bar{\varphi}_X})e^{-\beta\hat{\varrho}_x\psi_x}
		\mathrm{d}\varrho_x\Big)\,.
	\end{align*}
	The latter equation can be regarded as a linear equation in $\underline{\mathsf{X}}$.
	To this avail we expand the exponential in a series converging in $\mathscr{A}_\Gamma$ in the right-hand side:
	\begin{multline*}
		\omega_\Gamma^\beta\Big(\tilde{a}_\Lambda\fint_{SO(3)} (1-e^{\beta\sum_{X\ni x}(I-\hat{\varrho}_x)\bar{\varphi}_X})e^{-\beta\hat{\varrho}_x\psi_x}
		\mathrm{d}\varrho_x\Big)
		\\
		=-\sum_{n\geq 1}
		\frac{\beta^n}{n!}
		\sum_{\substack{X_1,\ldots,X_n\Subset\Gamma\\ x\in X_1\cap\ldots\cap X_n}}
		\omega_\Gamma^\beta\Big(
		\tilde{a}_\Lambda\fint_{SO(3)}
		\Big(\prod_{\ell=1}^n(I-\hat{\varrho}_x)\bar{\varphi}_{X_\ell}\Big)
		e^{-\beta\hat{\varrho}_x\psi_x}\mathrm{d}\varrho_x\Big)
		\\
		=-\sum_{n\geq 1}
		\frac{\beta^n}{n!}
		\sum_{\substack{X_1,\ldots,X_n\Subset\Gamma\\ x\in X_1\cap\ldots\cap X_n}}
		\sum_{X\subseteq S_n}
		\omega_\Gamma^\beta(\tilde{b}_{X\cup\Lambda_n})\,,
	\end{multline*}
	where in the last equality we applied equation \eqref{Eq: decomposition in eta-free elements - improved} where we recall that $S_n=X_1\cup\ldots\cup X_n$, $\Lambda_n:=\Lambda\cap S_n^c$.
	At this stage we may rewrite the latter equation as
	\begin{align}\label{Eq: uniqueness for KMS classical states - linear equation}
		(I-L_\beta)\underline{\omega}_\Gamma^\beta
		=\underline{\delta}\,.
	\end{align}
	Here $\underline{\delta}\in\underline{\mathsf{X}}$ has been defined, mutatis mutandis, as in equation \eqref{Eq: uniqueness for KMS quantum states - linear equation - source term}.
	Similarly, for all $\underline{f}\in\underline{\mathsf{X}}$, we set $(L_\beta\underline{f})_\varnothing=0$ and
	\begin{align*}
		(L_\beta\underline{f})_\Lambda(\tilde{a}_\Lambda)
		:=-\sum_{n\geq 1}
		\frac{\beta^n}{n!}
		\sum_{\substack{X_1,\ldots,X_n\Subset\Gamma\\ x\in X_1\cap\ldots\cap X_n}}
		\sum_{X\subseteq S_n}
		f_{X\cup\Lambda_n}(\tilde{b}_{X\cup\Lambda_n})\,,
	\end{align*}
	for all non-empty $\Lambda\Subset\Gamma$ and $\tilde{a}_\Lambda\in\tilde{\mathscr{A}}_\Lambda$ ---we recall that $x:=\min_{y\in\Lambda}y$.
	
	We will now prove that $\|L_\beta\|_{\mathcal{B}(\underline{\mathsf{X}})}<1$ for $\beta<\tilde{\beta}_{\textsc{u}}$, where $\tilde{\beta}_{\textsc{u}}$ is defined in equation \eqref{Eq: uniqueness for KMS classical states - estimate of critical temperature}.
	This implies that, $I-L_\beta$ is invertible and that \eqref{Eq: uniqueness for KMS classical states - linear equation} has a unique solution in $\underline{\mathsf{X}}$.
	Since $\omega_\Gamma^\beta\mapsto\underline{\omega}_\Gamma^\beta$ is injective and since any $(\beta,\delta_\Gamma)$-KMS classical state induces a solution to equation \eqref{Eq: uniqueness for KMS classical states - linear equation}, it follows that there exists a unique $(\beta,\delta_\Gamma)$-KMS classical state for all $\beta<\tilde{\beta}_{\textsc{u}}$.
	
	We observe that
	\begin{align*}
		|(L_\beta\underline{f})_\Lambda(\tilde{a}_\Lambda)|
		\leq\|\underline{f}\|_{\underline{\mathsf{X}}}\sum_{n\geq 1}
		\frac{\beta^n}{n!}
		\sum_{\substack{X_1,\ldots,X_n\Subset\Gamma\\ x\in X_1\cap\ldots\cap X_n}}
		\sum_{X\subseteq S_n}
		\|\tilde{b}_{X\cup\Lambda_n}\|\,.
	\end{align*}
	Recalling Remark \ref{Rmk: decomposition in eta-free elements - improved} we may provide a useful bound on the norm of each $\tilde{b}_{X\cup\Lambda_n}$.
	In particular we have
	\begin{align*}
		\|\tilde{b}_{X\cup\Lambda_n}\|
		&\leq 2^{|X|}\Big\|
		\tilde{a}_\Lambda
		\fint_{SO(3)}
		\Big(\prod_{\ell=1}^n(I-\hat{\varrho}_x)\bar{\varphi}_{X_\ell}\Big)
		e^{-\beta\hat{\varrho}_x\psi_x}\mathrm{d}\varrho_x
		\Big\|
		\\
		&= 2^{|X|}\Big\|
		\tilde{a}_\Lambda\sum_{A\subseteq\{1,\ldots,n\}}
		(-1)^{n-|A|}
		\Big(\prod_{\ell\in A}\bar{\varphi}_{X_\ell}\Big)
		\eta_x\Big(\prod_{\ell\notin A}\bar{\varphi}_{X_\ell}\Big)
		\Big\|
		\\
		&\leq 2^{|X|}2^n\|\tilde{a}_\Lambda\|
		\prod_{\ell=1}^n\|\varphi_{X_\ell}\|\,.
	\end{align*}
	Summing up we have
	\begin{align*}
		|(L_\beta\underline{f})_\Lambda(\tilde{a}_\Lambda)|
		&\leq\|\underline{f}\|_{\underline{\mathsf{X}}}\sum_{n\geq 1}
		\frac{\beta^n}{n!}
		\sum_{\substack{X_1,\ldots,X_n\Subset\Gamma\\ x\in X_1\cap\ldots\cap X_n}}
		\sum_{X\subseteq S_n}
		\|\tilde{b}_{X\cup\Lambda_n}\|\,.
		\\
		&\leq
		\|\tilde{a}_\Lambda\|\|\underline{f}\|_{\underline{\mathsf{X}}}
		\sum_{n\geq 1}
		\frac{(2\beta)^n}{n!}
		\sum_{\substack{X_1,\ldots,X_n\Subset\Gamma\\ x\in X_1\cap\ldots\cap X_n}}
		\prod_{\ell=1}^n3^{|X_\ell|}\|\bar{\varphi}_{X_\ell}\|
		\\
		&\leq
		\|\tilde{a}_\Lambda\|\|\underline{f}\|_{\underline{\mathsf{X}}}
		\sum_{n\geq 1}
		\frac{(6\beta)^n}{n!}
		\|\bar{\varphi}\|_{\log 3}^n\,.
	\end{align*}
	Overall we find
	\begin{align*}
		\sup_{\Lambda\Subset\Gamma}
		\sup_{\tilde{a}_\Lambda\in\tilde{\mathscr{A}}_\Lambda}
		\frac{|(L_\beta\underline{f})_\Lambda(\tilde{a}_\Lambda)|}{\|\tilde{a}_\Lambda\|}
		\leq
		\|\underline{f}\|_{\underline{\mathsf{X}}}
		\sum_{n\geq 1}
		\frac{1}{n!}(6\beta\|\bar{\varphi}\|_{\log 3})^n
		=\|\underline{f}\|_{\underline{\mathsf{X}}}
		(e^{6\beta\|\bar{\varphi}\|_{\log 3}}-1)\,.
	\end{align*}
	This proves that $L_\beta\underline{f}\in\underline{\mathsf{X}}$, moreover, $L_\beta\in\mathcal{B}(\underline{\mathsf{X}})$ with
	\begin{align*}
		\|L_\beta\|_{\mathcal{B}(\underline{\mathsf{X}})}
		\leq e^{6\beta\|\bar{\varphi}\|_{\log 3}}-1
		<1
		\quad\Leftrightarrow\quad
		\beta<\tilde{\beta}_{\textsc{u}}\,,
	\end{align*}
	as claimed.
\end{proof}

\begin{remark}\label{Rmk: uniqueness of classical KMS state - removal of C1 bound}
    \noindent

    \begin{enumerate}[(i)]
        \item 
    It is worth noting that the proof of Theorem \ref{Thm: uniqueness for KMS classical states} does not, in fact, rely on condition \eqref{Eq: assumption on potential - C1 norm on potentials}.  
    The latter is only needed to ensure that $\delta_\Gamma$ is a well-defined derivation on $\mathscr{A}_\Gamma$, so that it makes sense to consider classical $\delta_\Gamma$-KMS states, \textit{cf.} equation \eqref{Eq: classical KMS condition}.  
    Since $\delta_\Gamma$ is defined by a series involving Poisson brackets, it is natural to require a condition such as \eqref{Eq: assumption on potential - C1 norm on potentials}, which depends on the $C^1$-norm of the potentials.
    However, in the proof of Theorem \ref{Thm: uniqueness for KMS classical states}, the KMS condition is used only through the invariance property \eqref{Eq: invariance of KMS classical states under G action}.
    This property remains meaningful even when the potentials do not satisfy \eqref{Eq: assumption on potential - C1 norm on potentials}.
    From this point of view, if one focuses solely on condition \eqref{Eq: invariance of KMS classical states under G action}, it would suffice to assume $\|\bar{\varphi}\|_{\log 3}<+\infty$, which involves only the $C^*$-norm of the potentials.
    Clearly, there is a caveat: condition \eqref{Eq: invariance of KMS classical states under G action} is not equivalent to the classical KMS condition ---it is necessary, but not sufficient.

    \item
    Similar to the quantum case, Theorem \ref{Thm: uniqueness for KMS classical states} identifies an "optimal" subcritical inverse temperature $\tilde{\beta}_{\textsc{u}}$ for KMS classical states, \textit{cf.} equation \eqref{Eq: uniqueness for KMS classical states - estimate of critical temperature}.
    As a proof of concept, we may compute the actual value of $\tilde{\beta}_{\textsc{u}}$ in a concrete model.
    Specifically, we will consider the \textbf{classical anisotropic Heisenberg model} on $\Gamma=\mathbb{Z}^\nu$:
    \begin{align*}
    	\varphi_\Lambda := \begin{dcases}
    		-J(\delta(s_x^1s_y^1+s_x^2s_y^2)+s_x^3s_y^3) & \Lambda = \{x,y\}\in \Gamma_b\\
    		0 & \text{otherwise}
    	\end{dcases}\,.
    \end{align*} 
    Here for convenience we have taken $J>0$ to be constant, and $(s_x^1,s_x^2,s_x^3)$ are the three coordinate projections on $\mathbb{S}^2$ at site $x$.
    By direct inspection we have
    \begin{align*}
         \|\bar{\varphi}\|_{\log 3}
         =3\sup_{x\in \mathbb{Z}^\nu}
         \sum_{y:\{y,x\}\in\Gamma_b}J\| \delta(s_x^1 s_y^1 + s_x^2 s_y^2) + s^3_x s^3_y\|
         =6J\nu\max{\{|\delta|,1\}}\,,
     \end{align*}
    since  site each has $2\nu$ nearest neighbors.
    Considering equation \eqref{Eq: uniqueness for KMS classical states - estimate of critical temperature}, we find
    \begin{align*}
        \tilde{\beta}_{\textsc{u}} = 
        \frac{1}{18J\nu\max{\{|\delta|,1\}}}\,.
    \end{align*}
    For comparison, we may consider the "optimal" subcritical inverse temperature $\tilde{\beta}_{\textsc{u}}^{\textsc{fv}}$ obtained applying \cite[Prop. 6.39]{Friedli_Velenik_2017}.
    The latter result ensures uniqueness of states on $\mathscr{A}_\Gamma$ fulfilling the DLR condition, which is equivalent to the classical KMS condition in the present setting \cite{Drago_van_de_Ven_2023}.
    From a technical point of view, the uniqueness result presented in \cite[Prop. 6.39]{Friedli_Velenik_2017} is based on \cite[Lem. 6.99]{Friedli_Velenik_2017}, which can be used to estimate the inverse critical temperature.
    Specifically, uniqueness of KMS classical states is granted provided that
    \begin{align*}
        \exists\varepsilon>0\mid
        \sup_{x \in \mathbb{Z}^\nu} \sum_{\Lambda\ni x} 
        \| e^{-\beta \varphi_\Lambda}- 1 \|\, e^{(3+\varepsilon)|\Lambda|}
        \leq 1\,.
    \end{align*}
    For the case at hand we find
    \begin{align*}
        \sup_{x \in \mathbb{Z}^\nu} \sum_{y: \{y,x\}\in\Gamma_b}  \| e^{-\beta \varphi_{\{x,y\}}} - 1 \| e^{2(3+\varepsilon)} 
        = 2\nu e^{6+2\varepsilon}  (e^{\beta J\max{\{|\delta|,1\}}} - 1)\,,
    \end{align*}
    which is smaller than 1 provided that
    \begin{align*}
        \beta
        \leq\tilde{\beta}_{\textsc{u}}^{\textsc{fv}}
        :=\frac{1}{J\max{\{|\delta|,1\}}} \log\Big( 1 + \frac{1}{2\nu \, e^{6 + 2\varepsilon}} \Big)\,.
    \end{align*}
    It follows that
    \begin{align*}
        \frac{\tilde{\beta}_{\textsc{u}}^{\textsc{fv}}}{\tilde{\beta}_{\textsc{u}}}=
        18\nu\log{(1+\frac{1}{2\nu e^{6+2\varepsilon}})}
        \leq\lim_{\nu\to\infty}18\nu\log{(1+\frac{1}{2\nu e^{6+2\varepsilon}})}
        =\frac{9}{e^{6+2\varepsilon}}
        \leq \frac{9}{e^{6}}
        \approx 0.022\,,
    \end{align*}
    for any $\varepsilon\geq 0$.
    As a result, $\tilde{\beta}_{\textsc{u}}^{\textsc{fv}}\leq\tilde{\beta}_{\textsc{u}}$.
    \end{enumerate}
\end{remark}

\subsection{Common subcritical region for quantum and classical systems}
\label{Subsec: common subcritical region for quantum and classical systems}

We close this section by comparing Theorems \ref{Thm: uniqueness for KMS quantum states}-\ref{Thm: uniqueness for KMS classical states} within the common framework of strict deformation quantization and proving Corollary \ref{Cor: absence of classical and quantum phase transitions}.

We consider the quasi-local algebra of classical observables $\mathscr{A}_\Gamma$ defined in Section \ref{Subsec: proof of uniqueness for quantum KMS states}.
On the quantum side, we take the quasi-local algebra of quantum observables $\mathcal{A}_\Gamma$ over the Hilbert spaces $\mathcal{H}_x := \mathbb{C}^{2j+1}$.
Here $j \in \mathbb{Z}_+/2$ represents the spin of the quantum system and will be regarded as a semiclassical parameter, that is, the limit $j \to \infty$ corresponds to the semiclassical regime.

The results of \cite{Drago_Pettinari_Van_de_Ven_2025,Landsman_1998_I} show that the quasi-local algebras $\mathcal{A}_\Gamma$ and $\mathscr{A}_\Gamma$ fit into the framework of strict deformation quantization.
To avoid unnecessary technicalities, we simply note that this entails the existence of a linear map
\[Q_\Gamma \colon \mathscr{A}_\Gamma \to \mathcal{A}_\Gamma\,,\]
with $\|Q_\Gamma\| = 1$.
In fact, one has a family of such maps for various values of $j$, satisfying appropriate consistency conditions as $j \to \infty$.
For our purposes, it suffices to recall that $Q_\Gamma$ can be used to pull-back states $\omega_\Gamma \in S(\mathcal{A}_\Gamma)$ to states on $\mathscr{A}_\Gamma$: Specifically, $\omega_\Gamma\in S(\mathcal{A}_\Gamma)$ implies that $\omega_\Gamma \circ Q_\Gamma\in S(\mathscr{A}_\Gamma)$.
The semiclassical limit is then obtained by considering weak*-limit points, as $j \to \infty$, of sequences of states on $\mathscr{A}_\Gamma$ obtained in this way.

In \cite{Drago_Pettinari_Van_de_Ven_2025,vandeVen_2024} it was shown that the above construction is compatible with the KMS condition.
Specifically, let $\varphi$ be a family of classical potentials on $\mathscr{A}_\Gamma$.
Then one obtains a corresponding family $\Phi$ of quantum potentials on $\mathcal{A}_\Gamma$ by setting
\[
\Phi_\Lambda := Q_\Gamma(\varphi_\Lambda)\,,
\]
for all $\Lambda \Subset \Gamma$.
Given a $(\beta,\tau^\Gamma)$-KMS quantum state $\omega_\Gamma^\beta$ on $\mathcal{A}_\Gamma$ for each value of $j$, one may consider the semiclassical limit of the associated classical states $\omega_\Gamma^\beta \circ Q_\Gamma$.
It was shown in \cite[Thm. 3.5]{Drago_Pettinari_Van_de_Ven_2025} that any weak*-limit point of such a sequence is a $(\beta,\delta_\Gamma)$-KMS classical state.
Moreover, in the case where both the classical and the quantum KMS states are unique (for all values of $j$), the limit $j \to \infty$ can be taken without passing to subsequences.

For these reasons, it is desirable to obtain a sufficient condition on the family of classical potentials $\varphi$ ensuring the uniqueness of both classical KMS states and quantum KMS states associated with the quantized potentials $\Phi$.
This was achieved in \cite[Thm. 4.7, Rmk. 4.9]{Drago_Pettinari_Van_de_Ven_2025}, at the cost of requiring rather strong regularity assumptions on $\varphi$.
From this perspective, Theorems \ref{Thm: uniqueness for KMS quantum states}-\ref{Thm: uniqueness for KMS classical states} provide a more general condition which is spelled out in Corollary \ref{Cor: absence of classical and quantum phase transitions}.

The proof of the latter corollary goes by direct inspection.
Indeed, since $\|Q_\Lambda\|=1$, condition \eqref{Eq: condition for absence of classical and quantum phase transitions} implies \eqref{Eq: uniqueness for KMS quantum states - condition for uniqueness}.
Thus, there exists a unique $(\beta,\tau^\Gamma)$-quantum KMS state on $\mathcal{A}_\Gamma$ ---notice that this applies for all $j\in\mathbb{Z}_+/2$.
Moreover, condition \eqref{Eq: condition for absence of classical and quantum phase transitions} also implies uniqueness of classical KMS states.
Indeed we have
\begin{align*}
	\beta\|\bar{\varphi}\|_{\log 3}
	\leq\beta\|\bar{\varphi}\|_{\varepsilon+\log 3,2\beta}
	<\frac{1}{6}\frac{\varepsilon}{1+e^\varepsilon}
	<\frac{1}{6}\log 2\,,
\end{align*}
provided $\varepsilon>0$ is chosen small enough.
Thus, Theorem \ref{Thm: uniqueness for KMS classical states} applies and there exists a unique $(\beta,\delta_\Gamma)$-KMS classical state on $\mathscr{A}_\Gamma$.
This completes the proof of Corollary \ref{Cor: absence of classical and quantum phase transitions}.

It is important to stress that the proof of Corollary \ref{Cor: absence of classical and quantum phase transitions} is rather simple and strongly benefits from the fact that Theorem \ref{Thm: uniqueness for KMS quantum states} is formulated without assumption \eqref{Eq: assumption on potential - commutation}.
If this were not the case, one would need to verify whether the quantization $Q_\Gamma(\bar{\varphi}_\Lambda)$ of $\bar{\varphi}_\Lambda$ still commutes with the quantization $Q_\Gamma(\psi_x)$ of $\psi_x$.
This is a highly non-trivial requirement.
Moreover, since such commutativity should hold for all $j\in\mathbb{Z}_+/2$ in order to guarantee uniqueness for every semiclassical parameter, the properties of $Q_\Lambda$ (specifically, the Dirac-Groenewold-Rieffel condition) would then imply
\[
	\{\bar{\varphi}_\Lambda,\psi_x\}=0
	\qquad
	\forall\, x\in\Gamma,\;\Lambda\Subset\Gamma\,.
\]
The latter constitutes a necessary condition on the family of classical potentials, which is dispensed within Corollary \ref{Cor: absence of classical and quantum phase transitions}.


\end{document}